\newcolumntype{+}{!{\vrule width 2pt}}
\newlength\savedwidth
\theoremstyle{definition}
\newtheorem*{thm}{Theorem}
\newtheorem*{lem}{Lemma}
\renewcommand{\@biblabel}[1]{\quad#1.}
\date{}
\begin{document}
\baselineskip=16pt

\begin{flushleft}
{\Large
\textbf\newline{Extracting replicable associations across multiple studies: algorithms for controlling the false discovery rate} 
}
\newline
\\
David Amar\textsuperscript{1},
Ron Shamir\textsuperscript{1*},
Daniel Yekutieli\textsuperscript{2},
\\
\bigskip
\textbf{1} The Blavatnik School of Computer Science, Tel Aviv University, Tel Aviv, Israel
\\
\textbf{2} Department of Statistics and OR, Tel Aviv University, Tel Aviv, Israel
\\
\bigskip

* rshamir@tau.ac.il

\end{flushleft}
\section*{Abstract}
In almost every field in genomics, large-scale biomedical datasets are used to report associations. Extracting associations that recur across multiple studies while controlling the false discovery rate is a fundamental challenge. Here, we consider an extension of Efron's single-study two-groups model to allow joint analysis of multiple studies. We assume that given a set of p-values obtained from each study, the researcher is interested in associations that recur in at least $k>1$ studies. We propose new algorithms that differ in how the study dependencies are modeled. We compared our new methods and others using various simulated scenarios. The top performing algorithm, SCREEN (Scalable Cluster-based REplicability ENhancement), is our new algorithm that is based on three stages: (1) clustering an estimated correlation network of the studies, (2) learning replicability (e.g., of genes) within clusters, and (3) merging the results across the clusters using dynamic programming. 

We applied SCREEN to two real datasets and demonstrated that it greatly outperforms the results obtained via standard meta-analysis. First, on a collection of 29 case-control large-scale gene expression cancer studies, we detected a large up-regulated module of genes related to proliferation and cell cycle regulation. These genes are both consistently up-regulated across many cancer studies, and are well connected in known gene networks. Second, on a recent pan-cancer study that examined the expression profiles of patients with or without mutations in the HLA complex, we detected an active module of up-regulated genes that are related to immune responses. Thanks to our ability to quantify the false discovery rate, we detected thrice more genes as compared to the original study. Our module contains most of the genes reported in the original study, and many new ones. Interestingly, the newly discovered genes are needed to establish the connectivity of the module.

\section*{Lay Summary}

When analyzing results from multiple studies, extracting replicated associations is the first step towards making new discoveries. The standard approach for this task is to use meta-analysis methods, which usually make an underlying null hypothesis that a gene has no effect in all studies. On the other hand, in replicability analysis we explicitly require that the gene will manifest a recurring pattern of effects. In this study we develop new algorithms for replicability analysis that are both scalable (i.e., can handle many studies) and allow controlling the false discovery rate. We show that our main algorithm called SCREEN (Scalable Cluster-based REplicability ENhancement) outperforms the other methods in simulated scenarios. Moreover, when applied to real datasets, SCREEN greatly extended the results of the meta-analysis, and can even facilitate detection of new biological results.


\section*{Introduction}

Confidence in reported findings is a prerequisite for advancing any scientific field. Such confidence is achieved by showing replication of discoveries by further evidence from new studies \cite{McNutt2014}. In recent years, a new type of methodology called \textit{replicability analysis}, sometimes referred to as reproducibility analysis, was suggested as a way to statistically quantify the replication of discoveries across studies while controlling for the false discovery rate (FDR) \cite{Benjamini2009}. This type of analysis is crucial in studies that aim to detect new hypotheses by integrating existing data from multiple high-throughput experiments. 

The practical importance of replicability analysis is twofold. First, it is a tool for quantifying replication, and therefore the reliability, of reported results. This is of vital importance as in recent years concerns have been raised in several domains regarding low reproducibility, including economics \cite{Camerer2016}, psychology \cite{Braver2014}, medicine \cite{Ioannidis2005}, and biological studies that rely on high throughput experiments such as gene expression profiling \cite{Wirapati2008,Laas2016a}, and network biology \cite{Verleyen2016}. Second, collating information from multiple studies can lead to scientific results that may be beyond the reach of a single study. Indeed, replicability analysis was demonstrated as a tool for extracting new results by merging Genome Wide Association Studies (GWAS) \cite{HellerYekutieli2014}.

The underlying assumption in standard meta-analysis is that the multiple studies estimate the same effect. Aggregating information across studies produces estimators with smaller measurement error that yield considerably more power to reject the null hypothesis regarding this effect. While meta-analyses are widely applied and have been extensively studied in the statistical literature \cite{Hedges1985} and in computational biology \cite{Chang2013,Li2014}, in recent years the changes in the scale and also the scope of public high-throughput biomedical data has led to new methodological challenges. For example, Zeggini et al. \cite{Zeggini2007} analyzed results of genome-wide association scans for Type 2 Diabetes (T2D) on the same set of almost $2.5$ million SNPs from eight study populations. In such situations, the first, and more obvious, challenge is accounting for inflation in the number of false discoveries due to the multiplicity of outcomes. The second challenge is hidden in the null hypothesis that the effect size is $0$ in all the studies (as done in meta-analysis). That assumption is oblivious to the consistency of the effects, and thus it overlooks important scientific information. Third, there is a need to distinguish between true effects that are specific to a single study and true effects that represent general discoveries that are replicable. For example, Kraft et al. \cite{Kraft2009} suggested that for common genetic variants, the anticipated effects on the phenotype may be very similar to population biases in individual genetic association studies. While these are real discoveries in the sense that similar estimated effects are expected to be observed if the experiment could be replicated,  their scientific importance is limited because they are specific to a particular study population. For this reason, the authors argue that it is important to see the association in additional studies conducted using a similar, but not identical, study base.

In recent years several frequentist approaches were suggested for replicability analysis. Benjamini and Heller \cite{Benjamini2008} introduced an inferential framework for replicability that is based on tests of partial conjunction null hypotheses. For a meta-analysis of $n$ studies of the same $m$ outcomes and $u = 1 \ldots n$, the partial conjunction $H^{u/n} (g)$ is that outcome $g$ has a non-null effect in less than $u$ studies. Thus $H^{1/n} (g)$ is the standard meta-analysis null hypothesis that outcome $g$ has a null effect in all $n$ studies. The authors also introduced p-values for testing $H^{u/n} (g)$ for each outcome. Benjamini, Heller and Yekutieli \cite{Benjamini2009} applied the Benjamini-Hochberg FDR procedure \cite{Benjamini1995} (BH) to the partial conjunction hypotheses p-values, and suggested setting  $u = 2$ in order to assess replicability. Heller et al. \cite{Heller2014} developed an approach for checking if a follow-up study corroborates the results reported in the original study. Song and Tseng \cite{Song2014} proposed a method to evaluate the proportion of non-null effects of a gene. However, they used a standard meta-analysis null hypothesis and their method cannot handle a complementary form between the null hypothesis and the alternative hypothesis (i.e., composite hypothesis). As we discuss below, Bayesian methods handle these shortcomings and naturally offer a powerful framework for replicability analysis.

Heller and Yekutieli \cite{HellerYekutieli2014} introduced repfdr: an extension of the single-study empirical Bayes approach of Efron \cite{Efron2010} for testing the partial conjunction hypotheses in the multi-study case. They estimate the posterior probabilities of the various configurations of outcome effect status (null or non-null) across studies, and compute the local Bayes FDRs for each partial conjunction null by summing the posterior probabilities for the relevant configurations. The authors showed that their approach controls the FDR and offers more power as compared to the frequentist methods. Moreover, the advantage of their approach in power greatly increases for $u > 1$. However, the EM-based algorithm of the method is not scalable as the number of estimated parameters is exponential in the number of studies.

In this study, we developed three new empirical Bayes methods for FDR-controlled replicability analysis of many studies. The input data are a matrix of p-values (or z-scores), where rows represent the measured objects (e.g., genes) and columns represent the studies. Our three methods differ in the way the column dependency is modeled. First, under independence assumption we estimate replicability using dynamic programming. Second, if no assumptions are made, we propose a scalable extension to the EM algorithm of repfdr \cite{HellerYekutieli2014}. Finally, for the case where the studies are assumed to originate from independent clusters we present a new algorithm called SCREEN (Scalable Cluster-based REplicability ENhancement). We compared these methods and others that are not based on FDR estimation using various simulated scenarios and showed that SCREEN was consistently among the top performing algorithms.

We applied SCREEN to two cancer datasets, where each is a collection of multiple case-control gene expression experiments. In both cases SCREEN greatly improved the results obtained by standard meta-analysis, to a point where new biological insights emerge. The first dataset is a collection of 29 case-control gene expression cancer studies from different tissues. Here, SCREEN detected a large set of genes that are consistently up-regulated, highly enriched for cell proliferation and cell cycle regulation functions, and are well connected in known gene networks, indicating their functional coherence. The second dataset is a recent pan-cancer study that examined the expression profiles of patients with or without mutations in the HLA complex across 11 cancer types \cite{Shukla2015}. Here, SCREEN detected a large set of  up-regulated genes that are related to immune responses. While this result was also detected in the original study we reported many more genes thanks to our ability to quantify the false discovery rate, and we detected prominent genes and enriched pathways that were not reported previously. 


\section*{Results}

\subsection*{Preliminaries and notations}
We start with a brief introduction to the single-study model. For a full description, theoretical justification, and relation to the classic BH method, see \cite{EfronBook2010}. Given a large set of $N$ hypotheses tested in a large-scale study, the \textit{two-groups model} provides a simple Bayesian framework for multiple testing: each of the $N$ cases (e.g., genes in a gene expression study) are either null or non-null with prior probability $\pi_0$ and $\pi_1 = 1 - \pi_0$, and with z-scores (or p-values) having density either $f_0(z)$ or $f_1(z)$. Also, when the assumptions of the statistical test are valid, we know that the $f_0$ distribution is a standard normal (or a uniform distribution for p-values), and we call it the \textit{theoretical null}. The mixture density and probability distributions are:
\begin{eqnarray*}
f(z) = \pi_0 f_0(z) + \pi_1 f_1(z)\\
F(z) =\pi_0 F_0(z) + \pi_1 F_1(z)
\end{eqnarray*} 
For a rejection area $\mathcal{Z}_y = (-\infty,y)$, using Bayes rule we get:
\begin{eqnarray*}
Fdr(\mathcal{Z}_y) \equiv Pr\{ null| z \in \mathcal{Z}_y\} = \pi_0F_0(y)/F(y)
\end{eqnarray*}
We call $Fdr$ the \textit{(Bayes) false discovery rate} for $\mathcal{Z}$: this is the probability we would make a false discovery if we report $\mathcal{Z}$ as non-null. If $\mathcal{Z}$ is a single point $z_0$ we define the \textit{local (Bayes) false discovery rate} as:
\begin{eqnarray*}
fdr(z_0) \equiv Pr\{ null | z = z_0 \} = \pi_0f_0(z_0) / f(z_0)
\end{eqnarray*} 

In this work we consider an extension of Efron's two group model to analyze data of $n$ genes over $m>1$ studies. The data for gene $i$ are a vector of m statistics $Z_{i,\cdot} = (Z_{i,1}, \cdots, Z_{i,m})$ that are all either z-scores or p-values. For simplicity, from now on we assume that these data are z-scores. The unknown parameter for gene $i = 1,\cdots, n$ is a binary configuration vector $H_{i,\cdot} = (H_{i,1}, \cdots, H_{i,m})$, with $H_{i,j} \in \{0,1\}$. If $H_{i,j} = 0$ then gene $i$ is a $null$ realization in study $j$, and it is a $non-null$ realization otherwise. 

We assume that in each study $j$ the parameters of the two-groups model $\theta_j$: $(\pi_0^j, f_0^j, f_1^j, f^j)$ are fixed and focus on replicability analysis. In this study, we tested two methods for two-groups estimation: \textit{locfdr} \cite{Efron2004} and estimation based on mixture of Gaussians that we call \textit{normix}, see \textbf{Materials and Methods} for a full description.

Generally, unless mentioned otherwise, we also assume that the genes are independent. However, note that estimation of $\theta_j$ could account for gene dependence within single studies \cite{Efron2007,Efron2009}. Finally, we also assume that the z-scores of a gene are independent given its configuration. That is,
$$P(Z_{i,\cdot}|H_{i,\cdot}) = \prod_{j=1}^m P(Z_{i,j}|H_{i,j}) = \prod_{j=1}^m \left( f_0^j(Z_{i,j}) \right)^{(1-H_{i,j})} \left( f_1^j(Z_{i,j}) \right)^{H_{i,j}}$$

Next, we use $h\in\{0,1\}^m$ to denote an arbitrary configuration vector, and $\pi(h)$ to denote a probability assigned to the parameter space. We assume that the researcher has a set of configurations ${\cal{H}}_1 \subseteq \{0,1\}^m$ that represents the desired rejected genes. Specifically, in this work we assume that the researcher is interested in genes that are non-null in at least $k$ studies: ${\cal{H}}_1 = \{h:|h| \ge k\}$, where $|h| = \sum_{j=1}^m h_j$. 

As a note, selection of $k$ depends on the research question at hand. For example, Heller and Yekutieli used  $k=2$ to detect a minimal replicability of SNPs in a GWAS \cite{HellerYekutieli2014}. Such low $k$ values can also be reasonable if the $m$ studies represent different biological questions that are related, such as differential expression experiments from different cancer subtypes. On the other hand, if the $m$ studies represent tightly related experiments such as biological replicates then larger k (e.g., $m/2$) seems more reasonable.

The \textit{local false discovery rate} (fdr) of a gene $i$ can be formulated as:
$$
fdr(Z_{i,\cdot}) = Pr(\bar{{\cal{H}}_1} |Z_{i,\cdot}) = \sum_{h:h \not \in {\cal{H}}_1} P(h|Z_{i,\cdot}) = \sum_{h:h \not \in {\cal{H}}_1} \frac{P(Z_{i,\cdot}|h)P(h)}{P(Z_{i,\cdot})}
$$

For a given $k$ and ${\cal{H}}_1 = \{h:|h| \ge k\}$ we get: 
$$fdr_k (Z_{i,\cdot}) = \sum_{h:|h| < k} \frac{P(Z_{i,\cdot}|h)P(h)}{P(Z_{i,\cdot})}$$

In the next sections we present new fast algorithms for computing either $fdr_k$ exactly or an upper bound of it. 

\subsection*{An $O(mnk)$ algorithm under independence assumption}
\begin {lem}If the studies are independent (in the parameter space) then:
$$
fdr_k (Z_{i,\cdot}) = \sum_{h:|h| < k} \prod_{j=1}^m \frac{P(Z_{i,j}|h_j)P(h_j)}{f^j(Z_{i,j})}
$$
\end{lem}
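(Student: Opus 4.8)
The plan is to observe that both the numerator and the denominator inside the sum defining $fdr_k$ factor over the $m$ studies, so that the quotient factors term by term. The numerator is already handled by the standing modeling assumption that the z-scores of a gene are conditionally independent given its configuration, which gives $P(Z_{i,\cdot}|h) = \prod_{j=1}^m P(Z_{i,j}|h_j)$. The hypothesis that the studies are independent in the parameter space supplies the missing piece: it forces the prior to factor as $P(h) = \prod_{j=1}^m P(h_j)$. The only real work, then, is to show that the common denominator $P(Z_{i,\cdot})$ equals $\prod_{j=1}^m f^j(Z_{i,j})$.

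To establish this I would expand the marginal as a sum over all configurations and substitute the two factorizations:
\begin{eqnarray*}
P(Z_{i,\cdot}) = \sum_{h \in \{0,1\}^m} P(Z_{i,\cdot}|h)P(h) = \sum_{h \in \{0,1\}^m} \prod_{j=1}^m P(Z_{i,j}|h_j)P(h_j).
\end{eqnarray*}
The key step is the distributive identity that turns a sum of products over the product set $\{0,1\}^m$ into a product of one-dimensional sums, namely $\sum_{h}\prod_j g_j(h_j) = \prod_j \sum_{h_j} g_j(h_j)$, which applies here because each factor depends on $h$ only through its $j$-th coordinate. Each one-dimensional sum then collapses to $\sum_{h_j \in \{0,1\}} P(Z_{i,j}|h_j)P(h_j) = \pi_0^j f_0^j(Z_{i,j}) + \pi_1^j f_1^j(Z_{i,j}) = f^j(Z_{i,j})$ by the definition of the study-$j$ mixture density. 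Hence $P(Z_{i,\cdot}) = \prod_{j=1}^m f^j(Z_{i,j})$. Substituting the factored numerator and this factored denominator into $fdr_k(Z_{i,\cdot}) = \sum_{h:|h|<k} P(Z_{i,\cdot}|h)P(h)/P(Z_{i,\cdot})$ and canceling the product over $j$ term by term yields the claimed expression $\sum_{h:|h|<k} \prod_{j=1}^m P(Z_{i,j}|h_j)P(h_j)/f^j(Z_{i,j})$.

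I do not expect any serious obstacle: the whole argument is a substitution followed by the factorization of a sum over a product index set. The one point to state carefully is the justification of that factorization --- it is valid precisely because, after both independence assumptions are invoked, every summand is a product whose $j$-th factor is a function of $h_j$ alone, so no cross-study terms survive. It is also worth noting that the restriction $|h| < k$ in the outer sum is simply inherited from ${\cal{H}}_1 = \{h:|h| \ge k\}$ and plays no role in the factorization, which is carried out for the full marginal $P(Z_{i,\cdot})$ before the restriction is imposed.
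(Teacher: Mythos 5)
Your proposal is correct and follows essentially the same route as the paper's proof: factor the prior via the parameter-space independence assumption, factor the likelihood via the conditional independence of z-scores given the configuration, and then collapse the marginal $P(Z_{i,\cdot})$ into $\prod_{j=1}^m f^j(Z_{i,j})$ by exchanging the sum over configurations with the product over studies. Your write-up is somewhat more explicit than the paper's (which states the product-of-sums identity without naming it and leaves the final cancellation implicit), but the mathematical content is identical.
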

\begin{proof}
First, note that under the independence assumption $P(h) = \prod_{j=1}^m P(h_j) =  \prod_{j=1}^m {\pi_0^j}^{(1-h_j)}(1-\pi_0^j)^{h_j}$. Second, as the z-scores are independent given the configuration vector $h$ we get that:
\begin{eqnarray*}
P(Z_{i,\cdot}) &=& \sum_h P(Z_{i,\cdot}|h)P(h) = \sum_h \prod_{j=1}^m P(Z_{i,j}|h_j) {\pi_0^j}^{(1-h_j)}(1-\pi_0^j)^{h_j}\\
&=& \prod_{j=1}^m \left ( P(Z_{i,j}|h_j=0) \pi_0^j + P(Z_{i,j}|h_j=1)(1-\pi_0^j) \right )
\end{eqnarray*} 
\end{proof}

\textbf{Proposition:} If the studies are independent then $fdr_k$ can be computed in $O(mnk)$.

\begin{proof} By the lemma, the $fdr$ of a gene is based on the product of the two-group model densities in each study. Therefore:
$$
fdr_k^{indep} (Z_{i,\cdot}) = \sum_{h:|h| < k} \ \prod_{j=1}^m \frac{(\pi_0^j f_0^j (Z_{i,j}))^{1-h_j}((1-\pi_0^j) f_1^j (Z_{i,j}))^{h_j}}{f^j(Z_{i,j})}
$$

We use dynamic programming to calculate $fdr_k(z_i)$ as follows. Define:
$$
U[i,j,k^*] = \sum_{h:|h| = (k^*-1)} \ \prod_{j=1}^m \frac{(\pi_0^j f_0^j (Z_{i,j}))^{1-h_j}((1-\pi_0^j) f_1^j (Z_{i,j}))^{h_j}}{f^j(Z_{i,j})}
$$
These values can be calculated (for each gene $i$) by updating a table of $m \times (k+1)$ values. The base cases are:
$$U[i,j,1] = \prod_{j=1}^m \frac{\pi_0^j f_0^j (Z_{i,j})}{f^j(Z_{i,j})}$$

The recursive formulas are:
\begin{eqnarray*}
U[i,j,k^*] &=&  \frac{\pi_0^j f_0^j (Z_{i,j})}{f^j(Z_{i,j})} U[i,j-1,k^*] + \frac{(1-\pi_0)^j f_1^j (Z_{i,j})}{f^j(Z_{i,j})} U[i,j-1,k^*-1]
\end{eqnarray*} 

Finally, to obtain the $fdr$ of a gene we sum over the values in the last column:
$$
fdr_k^{indep} (Z_{i,\cdot}) = \sum_{k^*=1}^{k-1} U[i,m,k^*]
$$

The running time for analyzing each gene is $O(mk)$ and the total running time is $O(nmk)$.
\end{proof} 

\subsection*{Schemes for handling dependence}
\subsubsection*{Approximating the prior using restricted EM}
The empirical Bayes method of \cite{HellerYekutieli2014} estimates prior distribution $\pi(h)$ directly from the data. However, this approach has two drawbacks. First, the EM algorithm explicitly keeps a value for each possible configuration, which makes the algorithm intractable when $m$ increases. Second, the estimation for rare configurations might be inaccurate, unless $n>>2^m$.

As an alternative, we develop an algorithm that keeps in memory only a small set of high probability configurations. We then use these estimates to obtain an upper bound for the $fdr$ of a gene. We first describe the EM without any constraints on the configuration space, and then show that the same process can be used to obtain an estimator in the constrained case. That is, the EM is guaranteed to improve the solution and converge.
The EM formulation is based on repfdr \cite{HellerYekutieli2014,Yekutieli2014}, as follows: 

The E-step:
$$
P(H_{i,\cdot} = h | Z_{i,\cdot}, \pi^{(t)}(h)) = \frac{f(Z_{i,\cdot}|h)\pi^{(t)}(h)}{\sum_{h'} f(Z_{i,\cdot}|h')\pi^{(t)}(h')}
$$

The M-step:
$$
\pi^{(t+1)}(h) = \frac{1}{n} \sum_i P(H_{i,\cdot} = h | Z_{i,\cdot}, \pi^{(t)}(h)) =  \frac{1}{n} \sum_i \frac{f(Z_{i,\cdot}|h)\pi^{(t)}(h)}{\sum_{h'} f(Z_{i,\cdot}|h')\pi^{(t)}(h')}
$$
This process guarantees convergence to a local optimum. Our goal is to limit the search space. 
\begin{lem}
The EM algorithm above can be used to find a local optimum estimator under the constraint $\forall h \not \in {\cal{H}}' \ \pi(h)=0$, for any non-empty configuration set ${\cal{H}}'$. 
\end{lem}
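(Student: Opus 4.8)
The plan is to recognize the constrained estimation problem as ordinary EM for a finite mixture model whose component set is $\mathcal{H}'$ rather than all of $\{0,1\}^m$, and then to verify that the two update equations displayed above, when initialized with a prior supported inside $\mathcal{H}'$, never leave this restricted model. This reduces the lemma to the classical EM ascent theorem applied to the reduced family, so no new convergence machinery is needed. Since $\mathcal{H}'$ is nonempty, the restricted simplex $\Delta' = \{\pi : \pi(h) \ge 0,\ \sum_h \pi(h) = 1,\ \pi(h)=0 \ \forall h \notin \mathcal{H}'\}$ is nonempty and a valid starting point exists.

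First I would establish support invariance. Suppose the current iterate satisfies $\pi^{(t)}(h) = 0$ for all $h \notin \mathcal{H}'$. In the E-step the posterior weight $P(H_{i,\cdot} = h \mid Z_{i,\cdot}, \pi^{(t)})$ is proportional to $f(Z_{i,\cdot}\mid h)\,\pi^{(t)}(h)$, so it vanishes for every $h \notin \mathcal{H}'$. The M-step sets $\pi^{(t+1)}(h)$ to the average of these weights over the $n$ genes, hence $\pi^{(t+1)}(h) = 0$ for $h \notin \mathcal{H}'$ as well. Thus, once initialized in $\Delta'$, every iterate stays in $\Delta'$, and the clamping constraint requires no explicit enforcement.

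Second I would show that the displayed M-step is exactly the maximizer of the EM auxiliary function over $\Delta'$. Writing $w_{ih} = P(H_{i,\cdot}=h\mid Z_{i,\cdot},\pi^{(t)})$ and $W_h = \sum_i w_{ih}$, the auxiliary function is
$$Q(\pi \mid \pi^{(t)}) = \sum_h W_h \log \pi(h) + \text{const},$$
where the constant gathers the $\log f(Z_{i,\cdot}\mid h)$ terms that do not depend on $\pi$. Because $W_h = 0$ for $h \notin \mathcal{H}'$ by the E-step, maximizing $Q$ over the full simplex and over $\Delta'$ give the same answer, and by Gibbs' inequality (equivalently, a Lagrange-multiplier computation) that answer is $\pi(h) = W_h / \sum_{h'} W_{h'} = \tfrac{1}{n}\sum_i w_{ih}$, which is precisely the M-step. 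Hence the unconstrained update performs the constrained M-step automatically.

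Finally I would invoke the standard EM ascent argument on the restricted model. From Jensen's inequality applied to $\ell(\pi) = \sum_i \log \sum_h f(Z_{i,\cdot}\mid h)\pi(h)$ one gets the minorization $\ell(\pi) - \ell(\pi^{(t)}) \ge Q(\pi\mid\pi^{(t)}) - Q(\pi^{(t)}\mid\pi^{(t)})$; combined with the fact that $\pi^{(t+1)}$ maximizes $Q(\cdot\mid\pi^{(t)})$ over $\Delta'$ (Step two) and $\pi^{(t)} \in \Delta'$ (Step one), this yields $\ell(\pi^{(t+1)}) \ge \ell(\pi^{(t)})$. Since $\ell$ is bounded above, the sequence converges, and any fixed point is a stationary point of $\ell$ restricted to $\Delta'$, i.e.\ a local optimum of the constrained problem. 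I expect the main obstacle to be Step two: one must be sure that clamping to $\mathcal{H}'$ is consistent with the update rule rather than an extra projection that might break monotonicity. The resolution is that the constraint is self-enforcing, as the weights $W_h$ outside $\mathcal{H}'$ are identically zero, so the classical convergence theory transfers verbatim to the reduced component set.
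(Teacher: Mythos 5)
Your proof is correct and takes essentially the same route as the paper's: the key observation in both is that the EM updates preserve zero prior mass (if $\pi^{(t)}(h)=0$ then all later iterates keep $\pi(h)=0$), so initializing with $\pi^{(0)}(h)=0$ for every $h \notin \mathcal{H}'$ keeps all iterates inside the restricted simplex and reduces the lemma to standard EM convergence. Your second and third steps simply make explicit the ascent/minorization argument that the paper invokes implicitly through its earlier remark that the EM process is guaranteed to converge to a local optimum.
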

\begin{proof}
Note that during the EM iterations, if at some time point $t$ $\pi^{(t)}(h)=0$ then $\forall t*>t $ $\pi^{(t*)}(h)=0$. Therefore, setting the starting point of the EM such that 
 $\forall  h \not \in {\cal{H}}' $ $\pi^{(0)}(h)=0$ satisfies the constraint and ensures convergence.
\end{proof}

\subsubsection*{Estimation under space complexity constraints}

We begin this section with additional notation. Given a configuration vector $h\in\{0,1\}^m$, let $h[l]$ be the vector containing the first $l$ entries of $h$. Given a real valued vector $v$, let $v_{(i)}$ denote the i'th smallest element of $v$.

Our algorithm is based on the simple observation that if at some point $\pi(h[l]) \le \epsilon$ then any extension of $h[l]$ cannot exceed $\epsilon$. That is, $\pi(h[l+1]) \le \epsilon$ regardless of the new value in position $l+1$. Our algorithm works as follows. The user specifies a limit to the number configurations kept in the memory - $n_H$. For simplicity we assume that $n_H$ is a power of $2$. We first run the unrestricted EM algorithm on the first $log_2(n_H)-1$ studies. We then iteratively add a new study. In each iteration $l$ we keep four parameters: (1) $\hat{H}^l$ - the set of the top $n_H$ probability configurations, (2) $\hat{\pi}^l$ - the vector of their assigned probabilities, (3) $\hat{\xi}^l$ - an estimation of $\sum_{h[l] \in \hat{H}^l} \pi^{l}(h[l])$, and (4) $\hat{\epsilon}^l$ - an estimation of the maximal probability among the excluded configurations.

Initially, $l=log_2(n_H)-1$ and $\hat{H}^l$ contains all possible configurations of the first $l$ studies. In addition $\hat{\xi}^l=1$, and $\hat{\epsilon}^l=0$. In iteration $l+1$ we run the restricted EM algorithm on all possible extensions of $\hat{H}^l$. That is, the input configuration set for the EM is a result of adding either $1$ or $0$ at the $l+1$ position of each configuration in $\hat{H}^l$. The EM run produces initial estimations for our parameters, on which the following ordered updates are applied:
\begin{eqnarray}
\hat{\pi}^{l+1} &=&  \hat{\xi}^l \hat{\pi}^{l+1} \\
\hat{H}^{l+1} &=&  \{ h[l+1]; \hat{\pi}^{l+1}(h[l+1]) \ge \hat{\pi}^{l+1}_{(n_H / 2)} \} \\
\hat{\xi}^{l+1} &=& \sum_{h[l+1] \in \hat{H}^{l+1}} \pi^{l+1}(h[l+1])\\
\hat{\epsilon}^{l+1} &=& \max(\hat{\epsilon}^l,\max_{h[l+1] \not\in \hat{H}^{l+1}} (\hat{\pi}^{l+1}(h[l+1])))
\end{eqnarray}
Note that in step (2) above we keep the top $n_H/2$ configurations in $\hat{H}^{l+1}$. This set is then used as input to the EM run in the next iteration. We repeat the process above until $l=m$. The output of the algorithm is $\hat{H}^m, \hat{\pi}^m, \hat{\xi}^m, \hat{\epsilon}^m$. 

\subsubsection*{A fast algorithm for an upper bound for the fdr}
We now use the output of the algorithm from the previous section to obtain an upper bound for the $fdr_k$ of a gene in running time $O(m(k+ nH))$ for each gene. 

\begin{thm}
Given the prior probability $\pi(h)$ of each configuration $h$ in $\cal{H'} \subseteq \cal{H}$, and an upper bound $\epsilon$ for the probability of all excluded configurations, the following inequality holds:
$$
fdr_k(Z_{i,\cdot})  \le \frac{\sum_{h:|h| < k \land h \in \cal{H'}} P(Z_{i,\cdot}|h)(\pi(h) - \epsilon) + \epsilon \sum_{h:|h| < k } P(Z_{i,\cdot}|h)}{\sum_{h \in \cal{H'} } P(Z_{i,\cdot}|h)\pi(h)}
$$
\end{thm}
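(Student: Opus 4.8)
The plan is to treat $fdr_k(Z_{i,\cdot})$ as a single ratio and bound its numerator from above and its denominator from below, using only that every factor $P(Z_{i,\cdot}|h)$ and every prior value $\pi(h)$ is nonnegative. Substituting $P(Z_{i,\cdot}) = \sum_h P(Z_{i,\cdot}|h)\pi(h)$ into the definition of $fdr_k$ gives
$$fdr_k(Z_{i,\cdot}) = \frac{\sum_{h:|h|<k} P(Z_{i,\cdot}|h)\pi(h)}{\sum_{h} P(Z_{i,\cdot}|h)\pi(h)}.$$
Because this is a quotient of positive quantities, it is enough to show that the claimed right-hand side has a numerator at least as large as $\sum_{h:|h|<k} P(Z_{i,\cdot}|h)\pi(h)$ and a denominator no larger than $\sum_{h} P(Z_{i,\cdot}|h)\pi(h)$; the inequality then follows from the elementary fact that a positive ratio only increases when its numerator is enlarged or its (positive) denominator is shrunk.

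For the denominator I would simply discard the excluded configurations: since $\sum_{h \notin \mathcal{H}'} P(Z_{i,\cdot}|h)\pi(h) \ge 0$, the full normalizing sum is at least $\sum_{h \in \mathcal{H}'} P(Z_{i,\cdot}|h)\pi(h)$, which is precisely the denominator in the statement (and is strictly positive as soon as $\mathcal{H}'$ is nonempty, so the ratio is well defined). For the numerator I would split the sum over $\{h : |h| < k\}$ according to whether $h \in \mathcal{H}'$. On the kept configurations the prior is known exactly, while on the excluded ones the hypothesis $\pi(h) \le \epsilon$ gives $\sum_{h:|h|<k,\, h\notin\mathcal{H}'} P(Z_{i,\cdot}|h)\pi(h) \le \epsilon \sum_{h:|h|<k,\, h\notin\mathcal{H}'} P(Z_{i,\cdot}|h)$. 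This produces the natural bound $\sum_{h:|h|<k,\, h\in\mathcal{H}'} P(Z_{i,\cdot}|h)\pi(h) + \epsilon \sum_{h:|h|<k,\, h\notin\mathcal{H}'} P(Z_{i,\cdot}|h)$.

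The only nonobvious step, and the part I expect to require the most care, is recognizing that the numerator printed in the theorem is merely an algebraic rewriting of this natural bound. Adding and subtracting $\epsilon \sum_{h:|h|<k,\, h\in\mathcal{H}'} P(Z_{i,\cdot}|h)$ turns the excluded-configuration term into $\epsilon \sum_{h:|h|<k} P(Z_{i,\cdot}|h) - \epsilon \sum_{h:|h|<k,\, h\in\mathcal{H}'} P(Z_{i,\cdot}|h)$, and regrouping the $\epsilon$-terms with the known-prior sum gives exactly $\sum_{h:|h|<k,\, h\in\mathcal{H}'} P(Z_{i,\cdot}|h)(\pi(h)-\epsilon) + \epsilon \sum_{h:|h|<k} P(Z_{i,\cdot}|h)$, matching the statement. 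The reason for preferring this form is computational rather than mathematical: the sum $\sum_{h:|h|<k} P(Z_{i,\cdot}|h)$ ranges over all configurations yet is evaluable by a dynamic program analogous to the one in the independence proposition, whereas the first sum runs only over the small stored set $\mathcal{H}'$, and it is this separation that yields the stated $O(m(k+n_H))$ time per gene.
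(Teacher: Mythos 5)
Your proposal is correct and follows essentially the same route as the paper's proof: both split the sum over $\{h : |h|<k\}$ into kept and excluded configurations, bound the excluded part by $\epsilon \sum P(Z_{i,\cdot}|h)$, regroup via adding and subtracting $\epsilon$-terms to reach the $(\pi(h)-\epsilon)$ form, and finally replace $P(Z_{i,\cdot})$ by the smaller quantity $\sum_{h\in\mathcal{H}'}P(Z_{i,\cdot}|h)\pi(h)$. The only difference is presentational—you phrase it as bounding numerator above and denominator below in a single ratio, while the paper bounds the posterior sum first and lower-bounds $P(Z_{i,\cdot})$ at the end—and your closing remark about why the rewritten form supports the $O(m(k+n_H))$ dynamic program matches the paper's discussion following the theorem.
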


\begin{proof}
Given that for each $h \not \in \cal{H'}$ $\pi(h) \le \epsilon$, we get:
$$ 
fdr_k(Z_{i,\cdot}) = \sum_{h:|h| < k \land h \in \cal{H'}} P(h|Z_{i,\cdot})  +  \sum_{h:|h| < k \land h \not{\in} \cal{H'}} P(h|Z_{i,\cdot}) \le
$$ $$
\sum_{h:|h| < k \land h \in \cal{H'}} P(h|Z_{i,\cdot})  +  \epsilon \left (\sum_{h:|h| < k \land h \not{\in} \cal{H'}} \frac{P(Z_{i,\cdot}|h)}{P(Z_{i,\cdot})} \right)
$$

Thus:

$$ 
fdr_k(Z_{i,\cdot})  \le \sum_{h:|h| < k \land h \in \cal{H'}} \frac{P(Z_{i,\cdot}|h)\pi(h)}{P(Z_{i,\cdot})}  +  \epsilon \left (\sum_{h:|h| < k \land H \not{\in} \cal{H'}} \frac{P(Z_{i,\cdot}|h)}{P(Z_{i,\cdot})} \right) =
$$ $$
\sum_{h:|h| < k \land h \in \cal{H'}} \frac{P(Z_{i,\cdot}|h)(\pi(h) - \epsilon)}{P(Z_{i,\cdot})}  +  \epsilon \left ( \sum_{h:|h| < k } \frac{P(Z_{i,\cdot}h)}{P(Z_{i,\cdot})} \right )
$$
Finally, since $P(Z_{i,\cdot}) = \sum_h P(Z_{i,\cdot}|h)\pi(h) \ge \sum_{h \in \cal{H'} } P(Z_{i,\cdot}|h)\pi(h)$, we obtain:

$$
fdr_k(Z_{i,\cdot})  \le \frac{\sum_{h:|h| < k \land h \in \cal{H'}} P(Z_{i,\cdot}|h)(\pi(h) - \epsilon) + \epsilon \sum_{h:|h| < k } P(Z_{i,\cdot}|h)}{\sum_{h \in \cal{H'} } P(Z_{i,\cdot}|h)\pi(h)}
$$\\
\end{proof}

The term above can be calculated in $O(m(n_H+k))$. First, the terms $\sum_{h:|h| < k \land h \in \cal{H'}} P(Z_{i,\cdot}|h)(\pi(h) - \epsilon) $ and $\sum_{h \in \cal{H'} } P(Z_{i,\cdot}|h)\pi(h)$ are calculated directly using the output of our EM-like algorithm in $O(mn_H)$. Then, $\epsilon \sum_{h:|h| < k } P(Z_{i,\cdot}|h)$ can be calculated using dynamic programming in a similar fashion to our algorithm for calculating $fdr_k$ under independence assumption. Thus, the total running time for all genes is $O(mn(n_H +k))$.

\subsection*{Replicability Across Independent Study Clusters}

\subsubsection*{Method overview}

In this section we apply the ideas from the previous sections to obtain an algorithm for calculating the $fdr$ under the assumption that the studies originate from independent clusters. We call this approach SCREEN (Scalable Cluster-based REplicability ENhancement). Briefly, our algorithm has three stages. First, we use the EM-process on each study pair to create a network of study correlations. We then cluster the network to obtain a set of study clusters that are likely to be independent, see \textbf{Materials and Methods} for the full description of this step. Second, we run the EM approach on each cluster separately. Finally, we merge the results from the different clusters using dynamic programming. Note that this algorithm is a heuristic as it uses EM within each cluster.

\subsubsection*{An algorithm for combining study clusters}

Assume for now that we are given a clustering of the studies into $M$ independent clusters $C_1,\cdots,C_M$. Thus:
$$
\pi(h) = \prod_{j=1}^M P(h_{C_j})
$$
where $h_{C_j}$ denotes the subvector of $h$ confined to the studies in the cluster $C_j$.
Let $Z_{i,C_j}$ be the z-scores of gene $i$ in the studies of cluster $C_j$, then $fdr_k$ has the following form:
$$
\sum_{h:|h|<k} \ \prod_{j=1}^M \frac{P(Z_{i,C_j}\big |h_{C_j})P(h_{C_j})}{P(Z_{i,C_j})}
$$

This form is a generalization of the formulation under indepepndence assumption, which implies that the dynamic programming approach can be used to merge data across clusters. We now describe the full method.

We apply our EM approach to each cluster separately, and calculate the probability that gene $i$ has exactly $k^*$ non-null realizations in cluster $C_j$: 
$$
V_{C_j}[i,k^*] = \sum_{h_{C_j}:|h_{C_j}|=k^*-1} P(h_{C_j}|Z_{i,C_j})
$$

Let $V[i,j,k^*]$ be the probability that gene $i$ has $k^*-1$ non-null realizations over clusters $1,\cdots,j$. Then:
\begin{eqnarray*}
V[i,1,k^*] &=& \begin{cases} 
   V_{C_1}[i,k^*] & \text{if } k^* < |C_1| \\
   0       & \text{otherwise}
  \end{cases} \\ 
V[i,j,k^*] &=&  \sum_{k'=1}^{min(k^*,|C_j|)} V_{C_j}[i,k'] V[i,j-1,k^*-k']\\
fdr_k (Z_{i,\cdot}) &=& \sum_{k^*=1}^{k-1} V[i,m,k^*]
\end{eqnarray*}
The table $V$ above has $M \times k$ entries for each gene $i$, and the update rule takes $O(k)$. Thus, given the EM results in each cluster, the running time of this algorithm is $O(k^2M)$ for each gene.

\subsection*{Experimental Results}

\subsubsection*{Simulations}

We first tested the performance of several methods (including SCREEN) in detecting genes that are non-null in multiple studies using simulated data.

\textit{Simulation overview:} In each of the scenarios below we first started by creating a pair of matrices, $P$, and $H^P$. The number of genes $n$ was 5000 and the number of studies $m$ varied across the simulation scenarios below (in all scenarios $m\ge 20$). $P$ is a matrix of p-values, and $H_{i,j}^P \in \{0,1\}$ denotes whether the p-value of gene $i$ in study $j$ is from the null group ($H_{i,j}^P=0$) or the non-null group ($H_{i,j}^P=1$). We first simulated $H^P$, and then simulated $P$ given the gene configurations in $H^P$ as follows. For all cells $i,j$ where $H_{i,j}^P=0$ the values were randomly selected from a uniform distribution. For each cell $i,j$ for which $H_{i,j}^P=1$ the p-value was drawn with probability $0.5$ from $\beta(1,x)$ (i.e., low p-values), and otherwise from $\beta(x,1)$ (i.e., high p-values). We tested $x=$10, 100, and 1000. In addition to the non-null distributions, the scenarios below also differ in other parameters in the creation of $H^P$: the number of non-nulls, and the correlation structure among the studies (i.e., the columns of $H^P$).

\textit{Compared methods:} We evaluated six different approaches for replicability, see Materials and Methods for details on 1-3. (1) Fisher: Fisher's meta-analysis for each gene with a BH correction, (2) Exp-count: an estimator for the expected number of non-nulls, (3) BH-count: the number of q-values of at most $0.1$ for each gene after applying BH in each study, (4) SCREEN-ind: our dynamic programming algorithm for $fdr_k$ under independence assumption, (5) repfdr-UB: our algorithm that computes an upper bound for $fdr_k$ (it can be viewed as an extension of repfdr that can handle many studies), and (6) SCREEN: our approach for replicability across study clusters. 

\textit{Performance evaluation:} We tested the ability of the methods to detect genes that are non-null in several studies (i.e., genes with at least two non-null realizations). Here, the true parameter of a gene was the number $k$ of 1 values in its row in $H^P$. For each $k$ between 2 and 5 we ran all algorithms above. For repfdr-UB we set the number of configurations $n_H$ to $512$. For methods that are based on calculating the local fdr we used a threshold of $0.2$ to select genes (for each $k$). For methods that are based on counting we used $k$ as a threshold. For Fisher's meta-analysis we used $q\le0.1$ as a threshold. For a given $k$ we compared the output of the algorithms to the set of genes with at least $k$ non-null realizations (given in $H^P$). We calculated two scores to quantify the performance: the Jaccard coefficient and the false discovery proportion (FDP; i.e., the proportion of erroneously declared non-nulls).

\subsubsection*{Scenario 1: independent studies}

Here, a random set of 300 genes were selected to be non-nulls independently in each study. In addition, we selected 50 genes to have non-null $\beta(1,x)$ p-values in 5 additional studies.

The performance of the algorithms for $x=1000$ using locfdr and normix were similar (\textbf{Figures  1A, 1B}). In addition, the results illustrate why Fisher's meta-analysis test is not suitable for our goals: the FDP was very high ($\ge 0.25$), even for $k=2$ or $k=3$. In addition the Jaccard scores were low ($\le 0.3$). These results indicate that such meta-analysis methods have low power in detecting genes with reapearring signal. All other methods performed much better. Specifically, for low $k$ values, SCREEN-ind, SCREEN and Exp-count were superior and performed similarly. For larger $k$ values, the performance of these methods was high, and BH-count achieved the top performance. For the harder case of $x=100$ (\textbf{Figure  1C}) the Jaccard scores were much lower for all methods. Except for repfdr-UB, and Fisher's method, all FDP scores were low. When $x=10$ the $FDP$ scores of repfdr-UB were high (e.g., $\ge 0.25$) for each $k$ as well as the FDP scores of $SCREEN$ for $k=2$ using the normix method. All other FDP scores were very close to zero, and when the FDP values of SCREEN were high, very few genes where reported($\le 4$), see \textbf{Supplementary Figure  1}.

\begin{figure}[!ht]
   \centering
   \includegraphics[width=150mm,height=160mm]{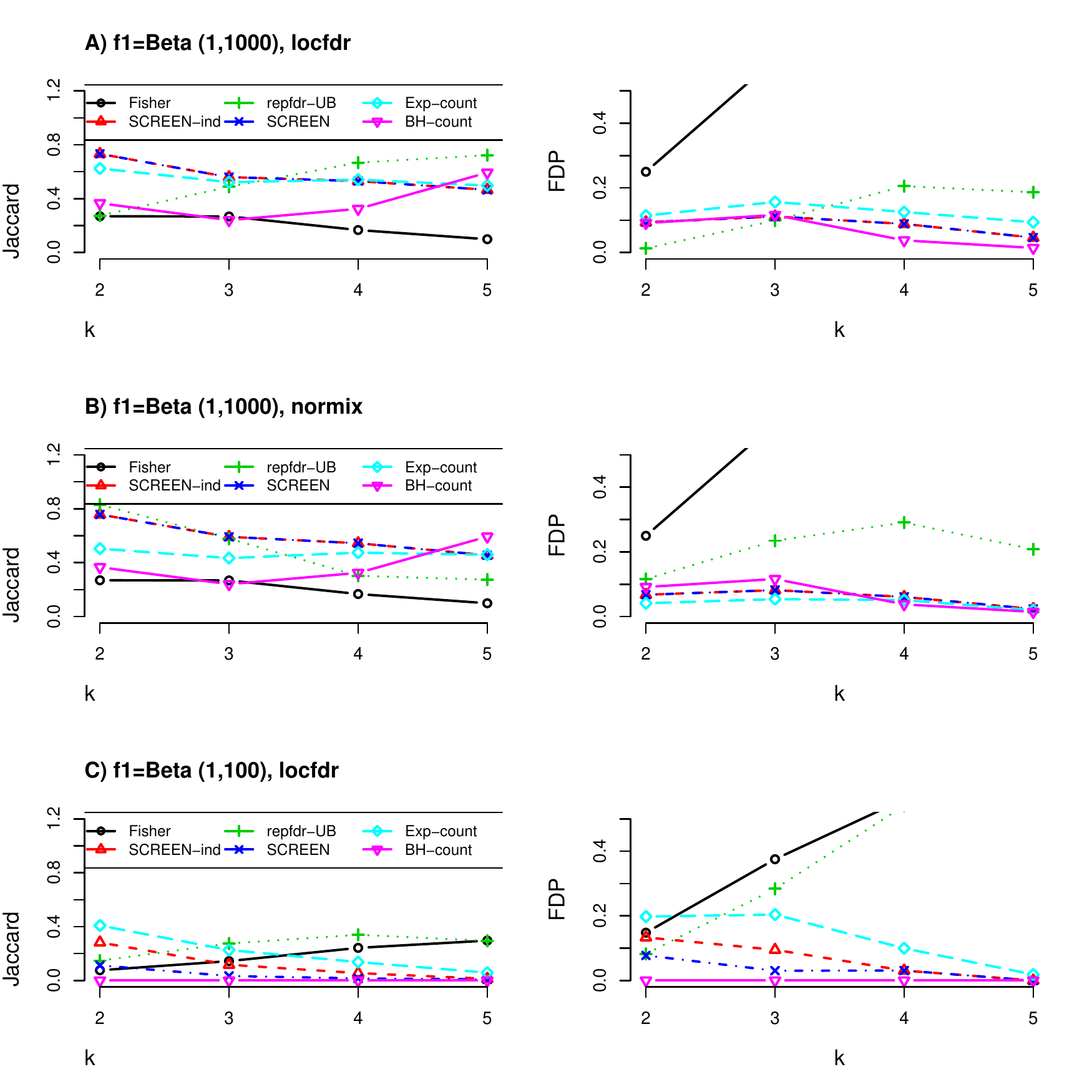}
    \caption{Simulation results: 20 independent studies (Scenario 1). A,B, and C represent different tests. A and B use the same non-null distribution in each study but a different method to learn the two-groups model. A and C use different non-null distributions in each study but the same method (locfdr) to learn the two-groups model. The left column shows the Jaccard scores and the right column shows the FDP scores. These scores are calculated by comparing the output gene set of each method for each $k$ to the set of genes for which the real number of non-nulls was at least $k$.}
    \label{fig:mesh1}
 \end{figure}

\subsubsection*{Scenario 2: dependent studies}

Here, the matrix $H^P$  was generated by first creating an auxiliary matrix $A$ of the same dimensions as $H^P$. The rows of $A$ were drawn independently from a multivariate normal distribution ${\cal N}(0, \Sigma_M)$, where $\Sigma_M$ specified a correlation structure of $M$ independent study clusters. Within clusters we set the correlation to $r=0.8$, or $r=0.4$, among all cluster studies. Finally, $H^P$ was created from $A$ by setting a threshold such that the expected number of non-nulls in each study was 300. That is, $H^P_{i,j}=1$ if and only if $A_{i,j} \ge \tau^j$, where $\tau^j$ is the $0.94$-th quantile of the normal distribution of column $j$ of $A$.
Given $H^P$, $P$ was created as in Scenario 1: null instances were drawn independently from a uniform distribution, whereas non-null instances were drawn from a distribution of lower p-values by setting $x=100$ (i.e., the non-nulls follow $\beta(1,100)$ or $\beta(100,1)$).

We tested two scenarios for each $r$: a single cluster (i.e., $M=1$) of 20 studies, or four clusters (i.e., $M=4$) of ten studies each. The results for $r=0.8$ are shown in \textbf{Figure 2}, and the results for $r=0.4$ are shown in \textbf{Supplementary Figure 2}. In terms of $FDP$, all algorithms except for Fisher's method and Exp-count performed well. In terms of the Jaccard score, the results were more mixed. First, for $r=0.8, M=1$ repfdr-UB and SCREEN were equivalent (as a single cluster was detected correctly) and reached the top performance ($\ge 0.8$). In all other scenarios the SCREEN algorithm achieved top, or nearly top, performance for large $k$ values ($k\ge4$), and $repfdr-UB$ had high FDP values for $r=0.4$. For lower $k$ values the SCREEN-ind approach and Exp-count achieved top performance with Jaccard $\ge 0.6$ in all tests. \textbf{Figure 3} shows the results of SCREEN for different $k$ values ($r=0.8$, $M=4$) compared to the real fdr values. The results for $k=4$ show that our estimations are highly correlated with the real values (\textbf{Figure 3A}), and that the estimated fdr values decrease with the real number of non-nulls (\textbf{Figure 3B}).  

In summary, our simulations show that out of the $fdr_k$-based methods, SCREEN-ind and SCREEN had low FDP values in all tests, while achieving high Jaccard performance: SCREEN-ind had a slight advantage in low $k$ values, whereas SCREEN had a slight advantage in higher $k$ values. The Exp-count method also performed well achieving high Jaccard scores, but had relatively high FDP values in some of the tests.

\begin{figure}[!ht]
   \centering
   \includegraphics[width=150mm,height=160mm]{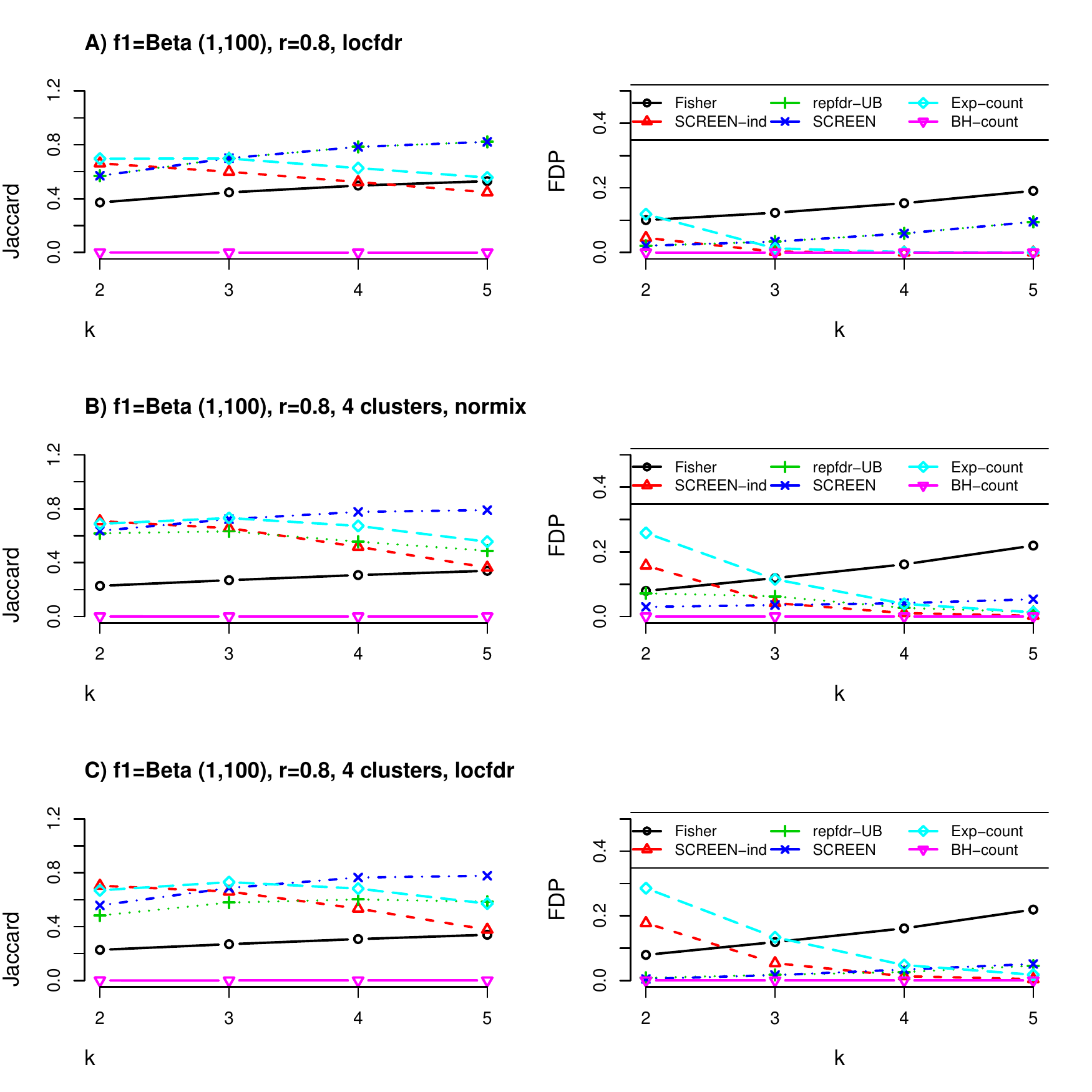}
    \caption{Simulation results: independent study clusters with high dependence within clusters (Scenario 2). A, B, and C represent different tests. Each test uses a different method used to learn the two-groups model in each study (A,C: locfdr; B:normix), or a different number of clusters (A:1, B,C:4). The left column shows the Jaccard scores and the right column shows the FDP scores. These scores are calculated by comparing the output gene set of each method for each $k$ to the set of genes for which the real number of non-nulls was at least $k$.}
    \label{fig:mesh1}
 \end{figure}

\begin{figure}[!ht]
   \centering
   \includegraphics[width=150mm,height=120mm]{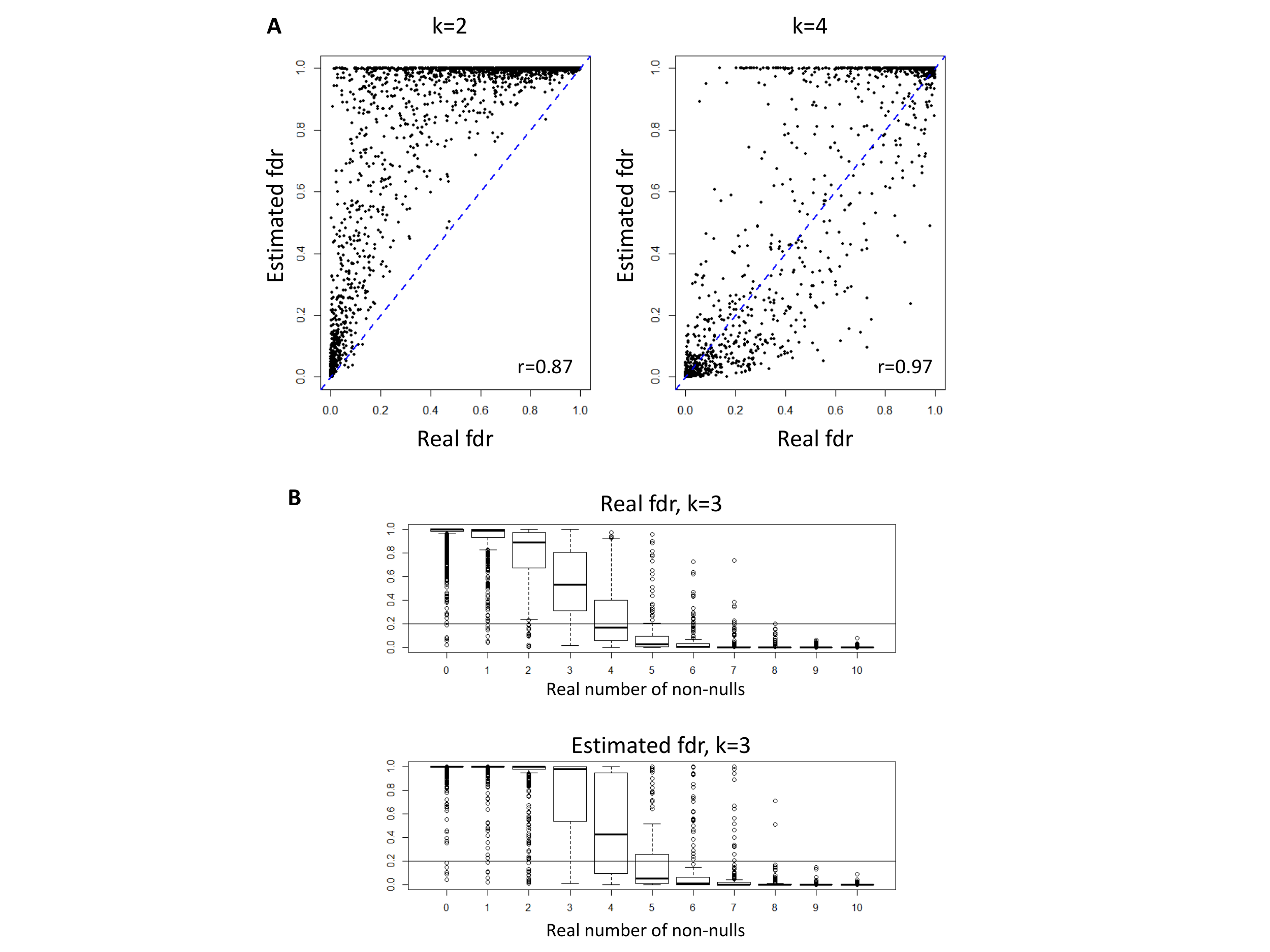}
    \caption{Simulation results: 4 clusters of 10 studies each. The figures show examples of SCREEN's estimations vs. real fdr values for different $k$ values. The number of study clusters is 4 and the correlation within the clusters is set using $r=0.8$. The non-null distribution within each study is $Beta(1,100)$. A) Real vs. estimated fdr values for $k=2$ and $k=4$. For $k=2$ the estimated $fdr$ values are higher, representing stringent FDR control. For $k=4$ the real and estimated values highly correlate. B) Boxplots of fdr distributions as a function of the real number of non-nulls. Up: real fdrs, down: estimated fdrs. Each boxplot represents a different set of genes whose real number of non-nulls is given in the x-axis label (except for 10, which means at least 10 non-nulls). The results show that most of SCREEN's errors are made for genes with $3$ or $4$ non-null realizations, and that the real fdr values will not necessarily capture these genes at $fdr_k \le 0.2$. On the other hand, a greater fdr threshold can be used (e.g., 0.4) to cover additional true negatives at the exspense of a few false positives (boxplots 1, 2, and ,3).}
    \label{fig:mesh1}
 \end{figure}

\subsubsection*{Cancer datasets}

We analyzed two real datasets. The first, which we call \textit{Cancer DEG}, is a collection of 29 gene expression studies that compared cancer to non-cancer tissues. The second, called \textit{HLA}, is from \cite{Shukla2015}. In this paper, Shukla et al. tested differential expression between cancer samples with and without somatic mutations in the HLA complex across 11 TCGA cancer subtypes.

In our simulations above the gene effects were sparse. That is, in all scenarios the non-null prior probability was relatively low. When we analyzed the real datasets we observed that while some studies were in line with these classic assumptions, many others were not, see \textbf{Supplementary Figure 3} and \textbf{Supplementary Figure 4}. To cope with these cases of dense effects we performed an extensive additional analysis on both simulated and real data, see \textbf{Supplementary Text}. Our main findings are as follows: (1) locfdr often fails to model these cases, (2) locfdr and normix with empirical null estimation overestimate the null prior probability, and (3) as reported in the previous section, using SCREEN with normix and a fixed theoretical null achieved very high Jaccard scores and low FDP on simulated data. We therefore use the latter approach to analyze the datasets in the subsequent sections.

\subsubsection*{The Cancer DEG dataset}

This dataset contains 29 microarray gene expression studies that compared cancer to non-cancer tissues. It was selected from our previously published compendium \cite{Amar2015} by taking all studies that had at least 10 cancer and 10 non-cancer samples (one study was excluded because its gene set was too small). For each dataset genes were assigned p-values for distinguishing between cancer and non-cancer classes using the GEO2R web tool of NCBI \cite{Barrett2013}. The resulting p-value matrix had 11540 rows (genes) and 29 columns, where the p-values were calculated using a two-tailed t-test for differential expression. \textbf{Supplementary Figure 5} shows the estimated pairwise correlations between studies. SCREEN identified eight clusters: a single large cluster of 19 studies and 7 clusters with one or more studies.

\textbf{Figure 4A} shows a comparison of SCREEN and SCREEN-ind in terms of the number of selected genes (at $0.2$ $fdr$), as a function of $k$, the minimum number of studies on which a gene must be detected. For low $k$ values (e.g., $k<10$) most genes had low $fdr$, suggesting that nearly all genes were differentially expressed in $k$ or more studies. For $k\le17$ values, SCREEN-ind reported more genes than SCREEN. However, SCREEN reported many more genes for higher $k$ values. For example, for $k=20$ SCREEN-ind detected 59 genes, whereas SCREEN detected 147 genes. In addition, for each $k$ we compared the output of each algorithm to Fisher's meta-analysis. Here, we used Spearman correlation to compare the gene ranking obtained by the methods. \textbf{Figure 4B} shows the results as a function of $k$. The correlation starts at high values that are close to 1 (for $k=2$) and decreases with $k$. \textbf{Figure 4C} depicts three examples of genes with different ranks: TOP2A, ATP6V1D, and GNPDA1. For each gene the plot shows the $-log_{10}$ p-value in each study, as well as the rank of the gene according to each of the methods. TOP2A was the top ranked gene in Fisher's meta-anlaysis, but had much lower ranks in SCREEN-ind and SCREEN. ATP6V1D and GNPDA1, which were the top two genes of SCREEN and SCREEN-ind, respectively, had much lower ranks in Fisher's meta-analysis. A comparison of the p-value patterns shows that ATP6V1D and GNPDA1 acheived higher rankings even though  TOP2A had more p-values that were extremely low (e.g., $<10^{-20}$). Thus, these examples show that our replicability analysis highlighted genes that were differential consistently across many studies, whereas meta-analysis (as expected) was more sensitive to extremely low p-values.

Our discoveries above were based on genes with consistently low p-values in cancer studies. However, the analysis did not use the direction of the differential expression. To shed more light on the directionality we analyzed the 147 genes detected by SCREEN for $k=20$. For each gene, we compared the times the reported t-statistic was negative and positive (corresponding to down- and up-regulation, respectively). Genes for which the number of negative values was at least thrice the number of positive values were denoted as down-regulated. Up-regulated genes were defined similarly as those for which the number of positive values was at least thrice the number of negative values. In total, there were 18 down-regulated and 99 up-regulated genes. The remaining 30 genes were denoted as \textit{mixed}. 

We used protein-protein interactions and pathway interactions as a gene network and plotted the subgraph induced by our 147 genes, see \textbf{Figure 4D}. Interestingly, the largest connected component was composed of 52 up-regulated genes and only a single down-regulated gene (KAT2B). Also, 88 out of 94 edges in the network connected up-regulated genes. Many of the genes in this "active" module were not ranked among the top 200 meta-analysis genes obtained using Fisher's method even though they are well known to play major roles in cancer formation and progression. For example. CDK1 is a master regulator of cell division, and CKAP5 is important for cytoplasmic microtubule elongation and is known to be over-expressed in colonic and hepatic tumors \cite{Charrasse1995,Enserink2010,Royle2012}. The most enriched GO term in the up-regulated gene set was mitotic cell cycle (38 genes, $q<10^{-28}$). The most enriched term in the active module was spindle organization (10 genes, $q=5.7\cdot10^{-11}$). Notably, several important genes, such as CDK1, MCM3, and MCM5, were not among the top 200 meta-analysis genes. In summary, our results show that SCREEN revealed a large gene set of consistently up-regulated genes in cancer that are highly relevant in function. On the other hand, while the results of the meta-analysis were also informative they are very sensitive to study-specific genes with extreme p-values and they do not tend to promote consistency. Thus, SCREEN was instrumental for separating the main up-regulated cancer genes that are consistent across most cancers, from other genes that manifest tissue-specific effects.

\begin{figure}[!ht]
   \centering
   \includegraphics[width=135mm,height=120mm]{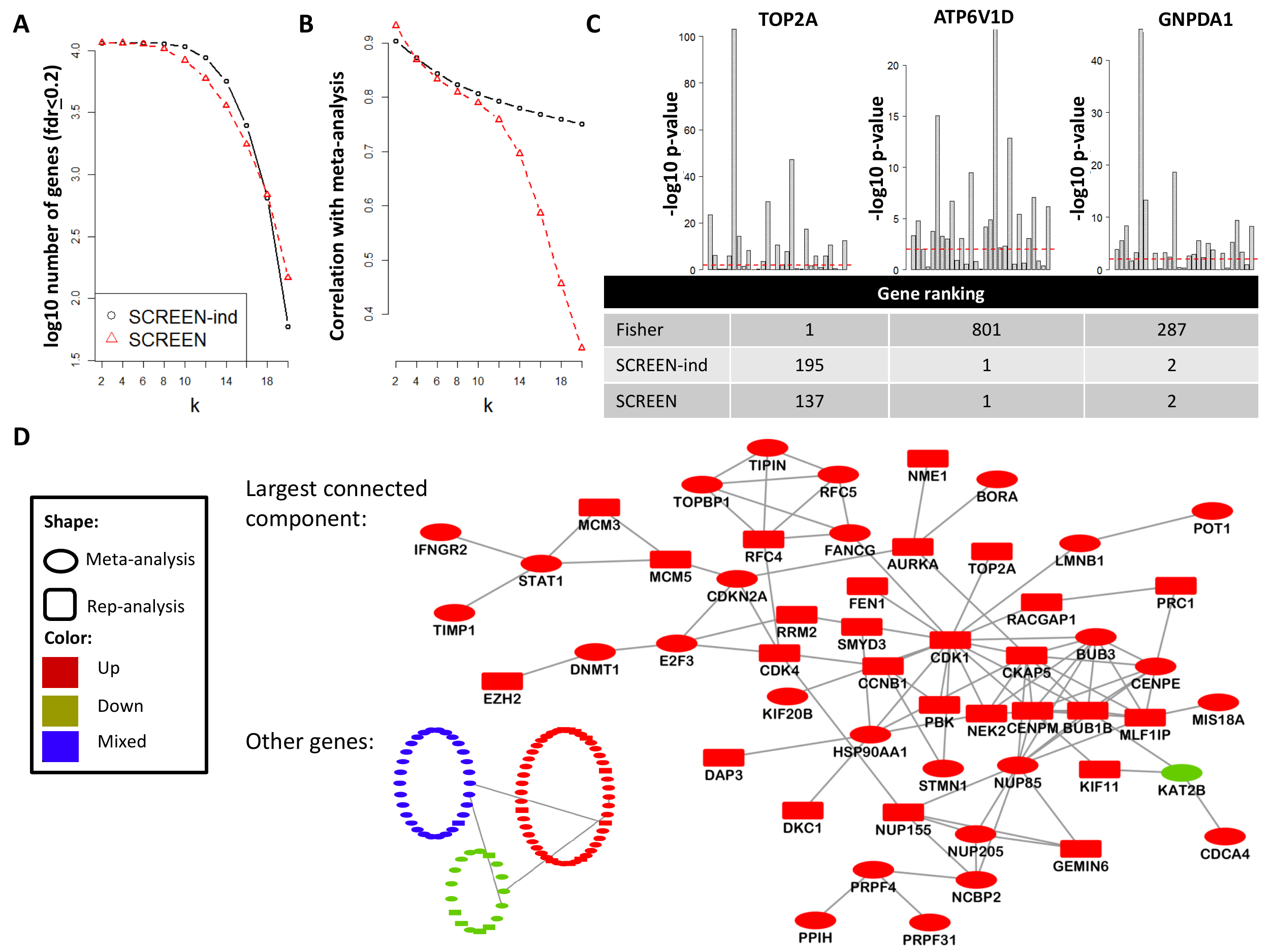}
    \caption{DEG dataset analysis. A) The number of reported genes at $0.2$ $fdr$ by SCREEN and SCREEN-ind as a function of $k$. B) The Spearman correlation between gene ranking of SCREEN and SCREEN-ind and of Fisher's meta-analysis as a function of $k$. C) The top ranked genes and their p-values. Top: the p-values of TOP2A, ATP6V1D, and GNPDA1 in each study. Bottom: the rank of these genes according to each of the methods (with k=20 for SCREEN and SCREEN-ind). ATP6V1D has a very low rank according to Fisher's meta-analysis even though it has consistently low p-values. D) Network analysis of the 147 genes reported by SCREEN with $k=20$. Nodes are genes, and edges are either protein-protein interactions or known pathway interactions. For each gene we calculated the number of up- and down-regulated t-statistics with a p-value $\le 0.01$. Genes for which the ratio between the up- and down events was $\ge 3$ ($\le 1/3$) were considered consistently up-regulated (down-regulated) in cancer (red and green nodes, 99 and 18 genes, respectively). All other genes were considered as mixed (blue nodes, 30 genes). Oval nodes represent genes ranked among the top 200 genes according to Fisher's meta-analysis (note that even at $10^{-5}$ Bonferroni correction, more than $10,000$ genes were selected in the meta-analysis, We therefore compared to the topmost genes, and chose the number 200 arbitrarily). Rectangular nodes are genes detected only by SCREEN.}
    \label{fig:mesh1}
 \end{figure}

\subsubsection*{The HLA dataset}

Shukla et al. \cite{Shukla2015} performed differential expression analysis between cancer samples with somatic mutations in the HLA complex and samples without such mutations across 11 different TCGA studies, each of a different cancer subtype. The p-value matrix, taken from \cite{Shukla2015}, had 18,128 rows (genes) and 11 columns. Similar to the Cancer DEG dataset, the p-values were based on a single-tail test for differential expression: p-values near zero represent up-regulation, and p-values near 1 represent down-regulation. Fisher's method was applied to assess the overall significance of a gene, and a total of 119 genes were selected using a p-value cutoff of $10^{-10}$. 

We applied both SCREEN and SCREEN-ind on these data. \textbf{Figure 5A} shows the number of selected genes (at $0.2$ $fdr$), as a function of $k$. Similarly to the DEG dataset, for low $k$ values ($k \le 3$) many genes (i.e. $\ge5000$ genes) had low $fdr$. However, there was a sharp decrease in the number of genes such that a few hundreds were found for $k=4$, and only a single gene was reported by SCREEN for $k\ge5$. When we compared the output of each algorithm to Fisher's meta-analysis (\textbf{Figure 5B}), we observed a similar trend of diminishing correlation with $k$. Here, the top ranked genes, TNNC2 and IFNG, had very high ranks in all methods, see \textbf{Figure 5C}.

For $k=4$ SCREEN reported many more genes than SCREEN-ind (405 vs. 135) genes, and both methods reported more genes that the original analysis (\textbf{Figure 5D}). We next performed functional analysis of the 452 genes detected by SCREEN. Here, most genes were consistently up- or down-regulated, and only 51 genes were mixed. Pathway enrichment analysis shows that our detected gene sets obtained similar results to the original publication and extended them, see \textbf{Supplementary Table 1}. Importantly, we recapitulated the main discovery of up-regulation of multiple immune related processes. Unlike the original publication, our analysis detected enrichment for cancer related pathways in the down-regulated gene sets of SCREEN, including Wnt signaling ($q=0.02$) and axon guidance ($q=0.04$). 

\textbf{Figure 5E} shows the largest connected component of the network induced by the up-regulated gene set. The network contains both genes reported in the original study and many newly reported genes, and the latter keep the component intact. Unlike the DEG dataset, the main connected component contains both up-, down-, and mixed-regulation patterns. However, the up-regulated genes form the backbone of the component. By focusing on the up-regulated genes we detected an active module of genes involved in activation of immune response, anti-tumor activity, and T-cell activation. Many high degree nodes in the module, such as  JAK2, were not reported in the original study. In summary, our results provide a comprehensive picture of the molecular response in patients with HLA mutations. SCREEN recapitulated the main findings obtained in the original study and revealed novel ones.

\begin{figure}[!ht]
   \centering
   \includegraphics[width=135mm,height=120mm]{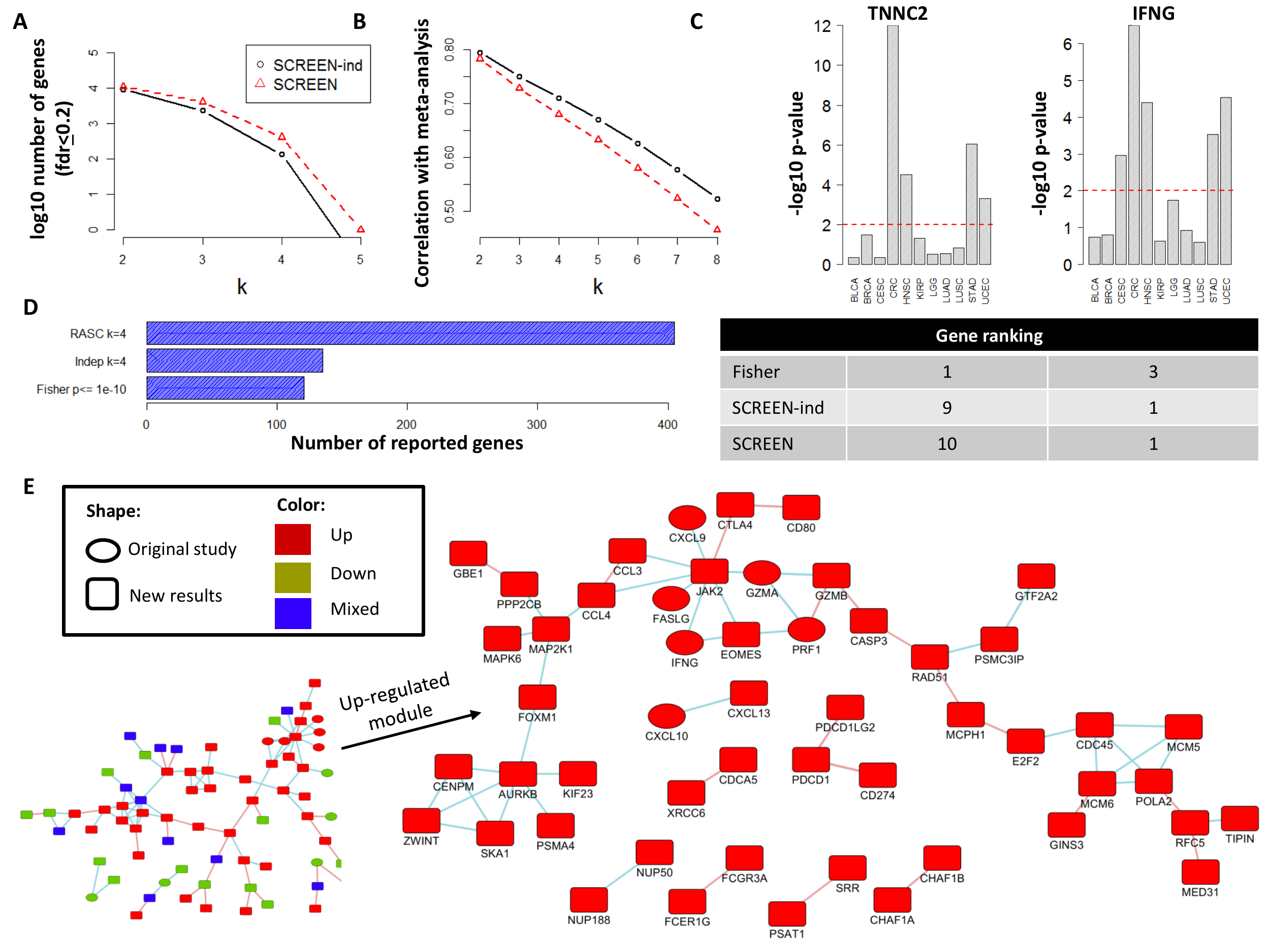}
    \caption{HLA dataset analysis. A) The number of reported genes at $0.2$ $fdr$ by SCREEN and SCREEN-ind as a function of $k$. B) The Spearman correlation between gene ranking of SCREEN and SCREEN-ind and of Fisher's meta-analysis as a function of $k$. C) The top ranked genes and their p-value in each study. Top: the p-values of TNNC2 and IFNG. Bottom: the rank of these genes according to each of the methods (with k=4 for SCREEN and SCREEN-ind). Overall, both genes are highly ranked by all methods. D) The number of genes reported by each method. E) The largest connected component produced by the 405 genes reported by SCREEN. Left: all genes, including up-, down-, and mixed-regulated genes. Right: focus on the up-regulated genes reveals an active module of immune response genes. Nodes are genes, and edges are either protein-protein interactions or known pathway interactions. As in the original study, this network suggests high activity of immune response. Two central genes in the immune response are INFG and JAK2. Our analysis detected both, whereas the original study did not detect JAK2. Moreover, the connectivity of the network is established by our newly detected genes.}
    \label{fig:mesh1}
 \end{figure}

\section*{Discussion}

In this paper we presented novel algorithms for empirical Bayes replicability analysis. We developed a new algorithm called SCREEN, which outperformed other approaches in many scenarios and consistently performed well in all simulations. SCREEN works in three stages. First, we cluster the studies based on their pairwise correlations, which are learned via EM. Second, we perform replicability analysis within each cluster using our restricted EM approach. This method extends previous studies by restricting the possible number of study configurations that are kept in memory. As a result, the method can analyze large study clusters. However, this is achieved at the expense of obtaining an upper bound for the $fdr$ instead of an exact estimation. Finally, the results of the replicability analyses of the clusters are merged using dynamic programming. For a given $k$, the output of SCREEN is the $fdr_k$ value for each gene, which can be used to detect genes that are non-null in at least $k$ studies. A possible extension of SCREEN is to have a more complex definition of replicability across study clusters. For example, a researcher may search for genes that are replicable across one or more study clusters, where a gene is replicable in a cluster only if it is non-null in at least some predefined percentage of the studies in that cluster. See \textbf{Supplementary Text} for an extended discussion on this topic.

We have shown that SCREEN performs well on various simulated scenarios, as well as on real datasets. Specifically, we analyzed two collections of cancer-related gene expression studies. In both cases the discovered gene sets highlighted active gene modules that are well connected in known gene interaction networks. Of note, the high connectivity of the modules is achieved thanks to genes that were not pointed out by standard meta-analysis. Thus, in this study, we both demonstrated replicability analysis as a standard tool for analyzing a large collection of studies, and provided novel algorithms that are accurate and scalable. 

Our study has some limitations that can be used as basis for future research. First, we assumed that genes are independent. While this assumption is usually made by state of the art methods, it is often incorrect. Second, while our algorithms report $fdr_k$ values of genes, we currently do not estimate their variance. Third, selection of $k$ was done manually on real datasets by looking at the number of reported genes (e.g., \textbf{Figure 4A}). Fourth, our restricted EM approach to analyze study clusters is a heuristic that only guarantees convergence into local optimum. Moreover, if the allowed maximum number of gene configurations is too low then our output is an upper bound for $fdr_k$ and not an exact estimate. Fifth, while our simple Exp-count approach to estimate the expected number of non-null realizations of a gene performed well in many simulated scenarios, it is only partially justified theoretically (see \textbf{Supplementary Text}). Finally, our methods rely on fixed estimates of the two-groups model of each study, and future methods could go a step further and estimate all parameters in a single flow (i.e., both the study parameters and the gene configuration probabilities).

\section*{Materials and Methods}

\subsection*{Learning a two-group model in each study}
We tried two different implementations of two-groups estimation algorithms. The first was $locfdr$ \cite{Efron2004}. This method implements two options to learn the empirical null: maximum likelihood and central matching. By default, we used the maximum likelihood estimator. However, in practice this algorithm might converge to a solution in which $ \hat \pi_0 > 1$. Whenever this occured, we tried the central matching approach instead. If the new estimator also had $ \hat \pi_0 > 1$ we used the theoretical null. 

The second approach was based on two previous methods: $Znormix$ \cite{McLachlan2006} and $fdrtool$ \cite{Strimmer2008a}. Znormix uses EM to learn a mixture of Gaussians, whereas $fdrtool$ assumes that the $null$ distribution is a half normal distribution. Here, we applied an EM approach to the absolute values of the z-scores. We extended these methods by learning a mixture of a half normal with $\sigma \ge 1$ for the null distribution, and a normal distribution with $\mu>0$ for the non-nulls. We call this approach $normix$. 

In practice, we discovered that our EM algorithm is sensitive to high values in the estimation of $f_1$. In addition, the methods above do not exploit additional information that could be obtained from the two-groups model: an estimation for the power of a study \cite{EfronBook2010}. That is, this is a measure of how separated the two groups are. 
In our analyses, we took a very stringent approach: in each study we multiply $f_1(z)$ by the estimated power of that study. The effect is a shrinkage in the $f_1(z)$ values that is proportional to the estimated quality of the study.

\subsection*{Clustering the studies}
Our algorithm above relies on a known partition of the studies into clusters. In this section we use an empirical Bayes approach to obtain the clusters. Our analysis has two main parts: learning a network, and clustering. 

First, we create a correlation network among the studies. For each study $i$, let $a_i=P(h_i=1)$ be the marginal non-null probability in that study. For studies $i,j$ let $a_{i,j} = P(h_i=1 \land h_j=1)$ be the shared non-null probability  of the two studies. We estimate these parameters as follows: $a_i$, and $a_j$ are taken from the two-groups model of each study, and $a_{i,j}$ is estimated by running our EM approach on the data of these two studies. The correlation of the studies is then estimated by:
$$r_{i,j} = \frac{a_{i,j}-a_i a_j}{\sqrt{a_i (1-a_i) a_j (1-a_j)}}$$
We obtain a robust estimation of $r_{i,j}$ by taking the mean of 100 bootstrap runs of the procedure above. That is, in each run we reestimate $a_{i,j}$ by running the EM on a bootstrap sample of the genes ($n/2$ genes out of $n$, sampled with replacement).

Next, we cluster the network using the infomap algorithm \cite{Rosvall2007}. Here, communities are detected using random walks in the underlying graph.  As the input for this algorithm is an unweighted network, we used a threshold of $0.1$ for the absolute correlation of study pairs to determine edge presence. This threshold is relatively low for general clustering tasks as it does not guarantee high homogeneity within clusters. However, it guarantees that the clusters discovered by SCREEN will be well-separated. In practice, our clustering approach found the correct clustering of studies in all simulations performed.

\subsection*{Other multi-study analyses}

In order to evaluate our $fdr$ approaches we compared them to several simple methods for multi-study analysis. Here we outline them briefly.

\subsubsection*{Fisher's meta-analysis}
We used Fisher's meta-analysis to merge the p-values of each gene into a single p-value. We then applied the BH FDR algorithm to account for multiple testing.
Note that Fisher's meta-analysis is not meant for replicability analysis and it does not take $k$ as input. Nevertheless, we added it to the comparison due to its use in recent publications (e.g., \cite{Schunkert2011,Kaever2014,Shukla2015}).

\subsubsection*{Counting-based BH analysis}
Here we run the BH multiple testing correction algorithm in each study separately. For each gene we count the number of q-values lower than some predefined threshold. In this study we used a threshold of $0.1$. We call this method \textit{BH-count}.

\subsubsection*{Counting-based tdr analysis}
In this analysis, for each gene, we use the \textit{local true discovery rates} (tdr) values obtained from the marginal two-groups model for each study. We then sum over these rates for the gene:
$$
\sum_{j=1}^m 1-P(h_{i,j}=0|Z_{i,j})
$$
This statistic can be interpreted as a biased estimator for the expected number of non-null realizations of gene $i$. See the \textbf{Supplementary Text} for more details. We call this method \textit{Exp-count}.

\subsection*{Enrichment and network analysis}
Network analysis and visualization was done in Cytoscape \cite{Shannon2003}. The GeneMANIA Cytoscape app \cite{Montojo2010,Vlasblom2014} was used to create the gene networks of the selected gene sets. GO enrichment analysis was performed using Expander \cite{Ulitsky2010}.

\section*{Acknowledgments}

This research was supported in part by the Israel Science Foundation as part of the ISF-NSFC joint program, and by the Israeli Center of Research Excellence (I-CORE), Gene Regulation in Complex Human Disease, Center No 41/11. Part of the work was done while DA and RS were visiting the Simons Institute for the Theory of Computing.


%
%
%



\pagebreak

\section*{Supplementary Text}

\subsection*{The expected number of non-nulls of a gene}
$$
E\left[ \sum_{j=1}^m H_{i,j}\big | Z_{i,\cdot} \right ] = \sum_{j=1}^m E\left[ H_{i,j}\big | Z_{i,\cdot} \right ] = \sum_{j=1}^m P\left[ H_{i,j}=1\big | Z_{i,\cdot} \right ]
$$
$$
= \sum_{j=1}^m \frac{ P(H_{i,j}=1)P(Z_{i,\cdot}|H_{i,j}=1) }{P(Z_{i,j})}  
 $$
 For each $j$ we now apply $P(Z_{i,\cdot}) = P(Z_{i,j})P(Z_{i,-j}|Z_{i,j})$ and $P(Z_{i,\cdot}|H_{i,j}=1) = P(Z_{i,j}|H_{i,j}=1)P(Z_{i,-j}|H_{i,j}=1,Z_{i,j})$, to partition $Z_{i,\cdot}$ into $Z_{i,j}$ and the remaining vector $Z_{i,-j}$. This produces the following form for our expectation:
 $$ E\left[ \sum_{j=1}^m H_{i,j}\big | Z_{i,\cdot} \right ] =  \sum_{j=1}^m \left ( \frac{ P(H_{i,j}=1)P(Z_{i,j}|H_{i,j}=1) }{P(Z_{i,j})} \right ) \left ( \frac{P(Z_{i,-j}| H_{i,j}=1,Z_{i,j})}{P(Z_{i,-j}| Z_{i,j})} \right )
$$
$$
 = \sum_{j=1}^m \left ( tdr_j(Z_{i,j}) \right ) \left ( \frac{P(Z_{i,-j}| H_{i,j}=1)}{P(Z_{i,-j}| Z_{i,j})} \right )
$$
The term above represents the expectation of non-nulls as a weighted sum over the true discovery rates of gene $i$ in each study $j$. The weight $\frac{P(Z_{i,-j}| H_{i,j}=1)}{P(Z_{i,-j}| Z_{i,j})}$ can be interpreted as a measure of discrepancy between observing that $H_{i,j}=1$ and observing $Z_{i,j}$. For example, if all studies are independent the weights are all $1$ and we get the sum of the tdr values. On the other hand, if all studies are highly correlated, the values in $Z_{i,-j}$ are very high (say $>5$), and $Z_{i,j}=0$, then $P(Z_{i,-j}| H_{i,j}=1)>P(Z_{i,-j}| Z_{i,j})$ and the weight of study $j$ will be $>1$, correcting upwards the low tdr value calculated for $Z_{i,j}=0$.

\subsection*{An extended discussion on analysis of study clusters}
Here we discuss a possible extension of our clustering analysis. If the researcher is looking for recurring genes across many clusters, but with a certain coverage in each cluster, then another definition of $fdr$ is required. More formally, we say that a gene is "interesting" in a cluster $C_j$ if it is non-null in at least $\lceil {\delta |C_j|} \rceil$ of the studies, where $\delta  \in (0,1]$. We say that a gene is "interesting" across clusters if it is interesting in at least $d$ clusters. Therefore, the rejection area for this analysis is determined by the two parameters $\delta$ and $d$. 

Let ${\cal{H}}_{\delta,d}$ be the group of configuration vectors in the null group:
$$
{\cal{H}}_{\delta,d} = \{ h:|\{ C_j; |h_{C_j}| < \delta |C_j| \}|< d \}
$$  
Then, the $fdr$ is defined as:
$$fdr_{\delta,d} (Z_{i,\cdot}) = \sum_{h \in {\cal{H}}_{\delta,d}} \frac{P(Z_{i,\cdot}|h)\prod_{j=1}^M P(h_{C_j})}{P(Z_{i,\cdot})} = \sum_{h \in {\cal{H}}_{\delta,d}} \ \ \prod_{j=1}^M \frac{P(Z_{i,C_j}|h_{C_j}) P(h_{C_j})}{P(Z_{i,C_j})} $$

Focusing on the first cluster, partition ${\cal{H}}_{\delta,d} $ into two groups: ${\cal{H}}_1^1 = \{h: h \in {\cal{H}}_{\delta,d} \land |h_{C_1}| < \delta |C_1| \}|\}$, and ${\cal{H}}_1^1 = \{h: h \in {\cal{H}}_{\delta,d} \land |h_{C_1}| \ge \delta |C_1| \}|\}$. Thus, we get:
$$
fdr_{\delta,d} (Z_{i,\cdot})  = \sum_{h \in {\cal{H}}_1^1} \frac{P(Z_{i,C_1}|h_{C_1}) P(h_{C_1})}{P(Z_{i,C_1})} fdr_{\delta,d} (Z_2,\cdots,Z_m) \ + \
$$
$$
\ \ \ \ \ \ \ \ \ \ \ \ \ \ \ \ \ \ \ \sum_{h \in {\cal{H}}_1^2} \frac{P(Z_{i,C_1}|h_{C_1}) P(h_{C_1})}{P(Z_{i,C_1})} fdr_{\delta,d-1} (Z_2,\cdots,Z_m)\\
$$
The formula above can be applied recursively for each cluster which suggests a dynamic programming calculation similar to our main algorithms. As in SCREEN, this calculation requires using the EM approach within each cluster.

This formulation requires a much more complex parameterization of the null group, such as determining the suitable $\delta$ for the application at hand. We expect that such formulation can be useful in future studies that will merge a large number of study clusters of similar sizes. 

\subsection*{Analysis of dense effects}
When plotting the p-values for individual studies in the cancer data that we analyzed (\textbf{Supplementary Figure 3} and \textbf{Supplementary Figure 4}) many studies seem to reflect very high proportions of non-null realizations. In order to evaluate the performance of locfdr and normix in this situation we performed an exploratory analysis. 

We analyzed the GSE10072 study, which shows a very uneven p-value distribution  (\textbf{Supplementary Text Figure A}). Here, using either normix or locfdr with estimation of the empirical null distribution resulted in very high $\pi_0$ estimates that seem to shrink the non-null group (i.e., $\pi_0 \ge 0.95$). On the other hand, when we used the theoretical null these estimations decreased substantially. As most of the density of the p-value distribution is concentrated at bins close to 0 or 1, the theoretical null estimation seems more suitable, see \textbf{Supplementary Text Figure A; C,D}. When we tried to use locfdr with theoretical null for all 29 studies of the DEG dataset, the algorithm failed to produce any output in 19 cases, even after we modified the default parameters, such as the number of degrees of freedom allowed, or the curve fitting method. For the reasons above we chose to use the normix approach with a fixed theoretical null for the analyses of the real datasets.

\begin{minipage}{\linewidth}
\makebox[\linewidth]{
 \includegraphics[width=150mm,height=100mm]{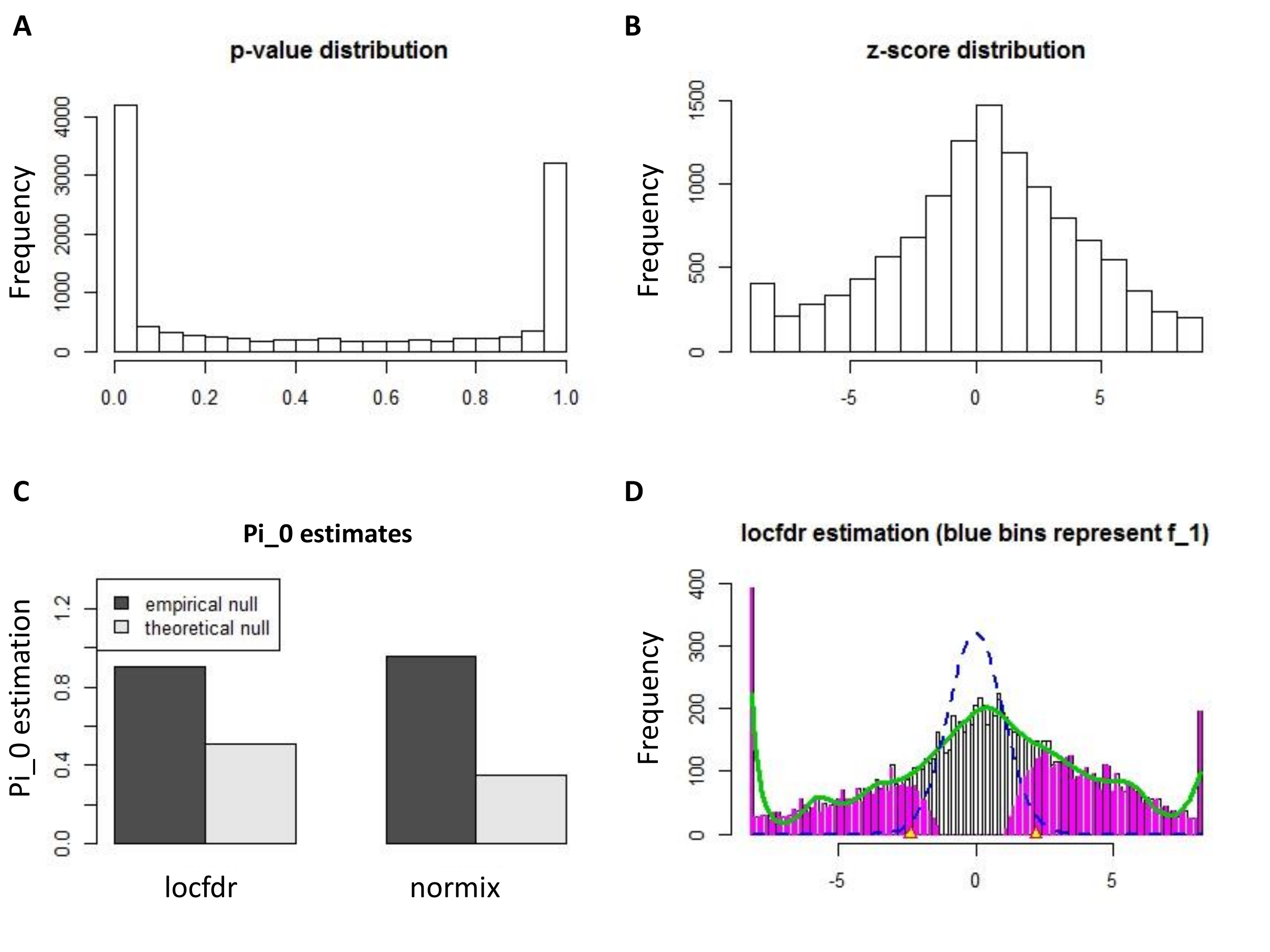}}
\textbf{Supplementary Text Figure A} Analysis of the GSE10072 dataset. A) histogram of the gene p-values. B) histogram of the gene z-scores. C) $\pi_0$ estimates using the two groups estimation algorithms with and without using the theoretical null. Empirical null estimation shrinks the non-null probability, whereas using the theoretical null captures the high percentage of non-null realizations. D) Standard plot of the locfdr estimation using the theoretical null. The plot shows the z-scores histogram with blue bins representing $f_1$. The dotted blue line is the standard normal distribution. The green line represents the estimated $f$ distribution.
\end{minipage}\\ \\

We also performed an additional simulation study in order to explore the scenario of dense effects. Using the notation from the simulations in the main text, we simulated $n=5000$ genes over $m=30$ studies. Studies 1-10 had no non-null realizations and studies 11-30 had 3000 ($60 \% $ of the genes). Studies 11-20 were all independent, studies 21-30 were all from a single cluster with $r=0.8$. To mimic the dense effects observed in some of the real datasets, the non-null realizations were obtained by sampling z-scores from a mixture of two normal distributions: $N(3,3)$ (corresponding to very low p-values), and $N(-3,3)$ (corresponding to very high p-values close to $1$). This simulates a case in which the variance of $f_1$ results from both standard noise levels (similar to the null distribution) and high noise levels of the gene effects. The results of the six tested algorithms on these data are shown in \textbf{Supplementary Text Figure B}. As in the GSE10072 dataset, locfdr with theoretical null did not report any output in many cases and it was therefore ommitted. The figure shows that using normix with a fixed theoretical null was {\color{red} markedly} better than all other approaches (e.g., Jaccard $0.78$ with theoretical null vs. $0.58$ with empirical null for SCREEN with k=7). As expected, as the null and non-null distributions are well separated in each study, most methods performed well and had high Jaccard scores and low FDP. SCREEN, Exp-count, Fisher, and SCREEN-ind all had high Jaccard scores with mild differences.

\begin{minipage}{\linewidth}
\makebox[\linewidth]{
 \includegraphics[width=160mm,height=140mm]{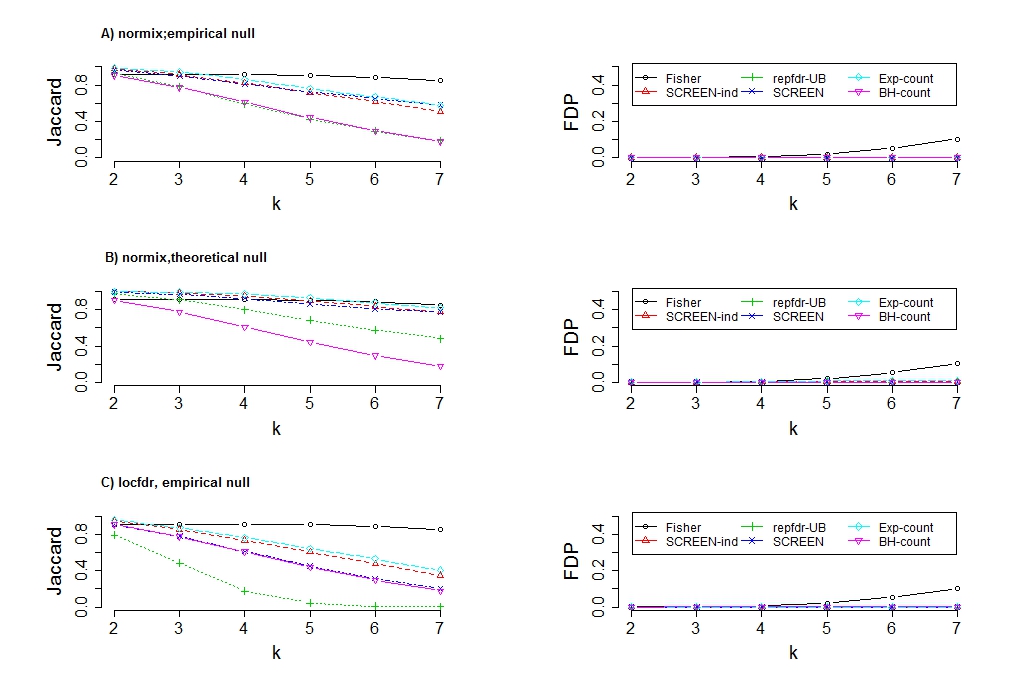}}
\textbf{Supplementary Text Figure B} Simulation results: a total of 30 studies, 20 of which (10 are in a single clusters and the rest are all independent) are with dense effects. The left column shows the Jaccard scores and the right column shows the FDP scores.
\end{minipage}

\pagebreak
\section*{Supplementary Tables}

\textbf{Supplementary Table 1} The studies of the cancer DEG dataset
\begin{center}
    \begin{tabular}{ | l | l | l | p{8cm} |}
    \hline
    GEO ID [ref] & \#cancer & \#control & description \\ \hline
	GSE2549 \cite{Sugarbaker2005} & 42 & 11 & Malignant pleural mesothelioma \\ \hline
	GSE2719 \cite{Detwiller2005} & 39 & 15 & Soft tissue sarcoma (multiple tissues) \\ \hline
	GSE4107 \cite{Hong2007} & 10 & 10 & Colonic mucosa from colorectal cancer patients and controls \\ \hline
	GSE4115 \cite{Spira2007} & 79 & 73 & Bronchial Epithelium from smokers with or without lung cancer \\ \hline
	GSE4290 \cite{Sun2006} & 157 & 23 & Different types of glioma vs healthy controls \\ \hline
	GSE5764 \cite{Turashvili2007} & 10 & 20 & Invasive lobular and ductal breast carcinomas \\ \hline
	GSE6344 \cite{Gumz2007} & 12 & 12 & Kidney cancer \\ \hline
	GSE6691 \cite{Gutierrez2007} & 43 & 13 & B cells from several hematopoeitic cancer types vs healthy B-cells or plasma cells \\ \hline
	GSE7803 \cite{Zhai2007} & 31 & 10 & Cervical squamous cell carcinomas \\ \hline
	GSE8671 \cite{Sabates-Bellver2007} & 32 & 32 & Colorectal adenoma \\ \hline
	GSE9476 \cite{Stirewalt2008} & 26 & 38 & Acute myeloid leukemia \\ \hline
	GSE9574 \cite{Tripathi2008} & 15 & 14 & Breast epithilium, breast cancer \\ \hline
	GSE9750 \cite{Scotto2008} & 33 & 33 & Cervical cancer \\ \hline
	GSE10072 \cite{Landi2008} & 58 & 49 & Lung adenocarcinoma (smoking individuals) \\ \hline
	GSE12452 \cite{Sengupta2006} & 31 & 10 & Nasopharyngeal carcinoma \\ \hline
	GSE12453 \cite{Brune2008} & 42 & 20 & Lymphocyte-predominant Hodgkin lymphoma \\ \hline
	GSE14245 \cite{Zhang2010} & 12 & 12 & Saliva, Pancreatic cancer vs controls \\ \hline
	GSE14407 \cite{Bowen2009} & 12 & 12 & Ovarian cancer \\ \hline
	GSE14520 \cite{Roessler2010} & 22 & 24 & Hepatocellular carcinoma \\ \hline
	GSE19804 \cite{Lu2010} & 60 & 60 & Non-smoking lung cancer \\ \hline
	GSE20189 \cite{Rotunno2011} & 73 & 80 & Peripheral whole blood, lung cancer \\ \hline
	GSE20347 \cite{Hu2010} & 17 & 17 & Esophageal squamous cell carcinoma \\ \hline
	GSE20437 \cite{Graham2010} & 18 & 24 & Epithelium from breast cancer patients and cancer-free prophylactic mastectomy patients \\ \hline
	GSE22529 \cite{Gutierrez2010} & 41 & 11 & Chronic lymphocytic leukemia \\ \hline
	GSE26566 \cite{Andersen2012} & 104 & 59 & Cholangiocarcinoma	\\ \hline
	GSE26910 \cite{Planche2011} & 12 & 12 & Stroma, breast and prostate cancers vs. controls \\ \hline
	GSE27562 \cite{LaBreche2011} & 57 & 105 & PBMCs from breast cancer patients and controls \\ \hline
	GSE28735 \cite{Zhang2013} & 45 & 45 & Pancreatic cancer \\ \hline
	GSE32665 \cite{Kim2013} & 87 & 92 & Lung adenocarcinoma \\ \hline
    \hline
    \end{tabular}
\end{center}

\bibliographystyle{ieeetr}

\pagebreak
\section*{Supplementary Figures}

\begin{minipage}{\linewidth}
\makebox[\linewidth]{
 \includegraphics[width=150mm,height=150mm]{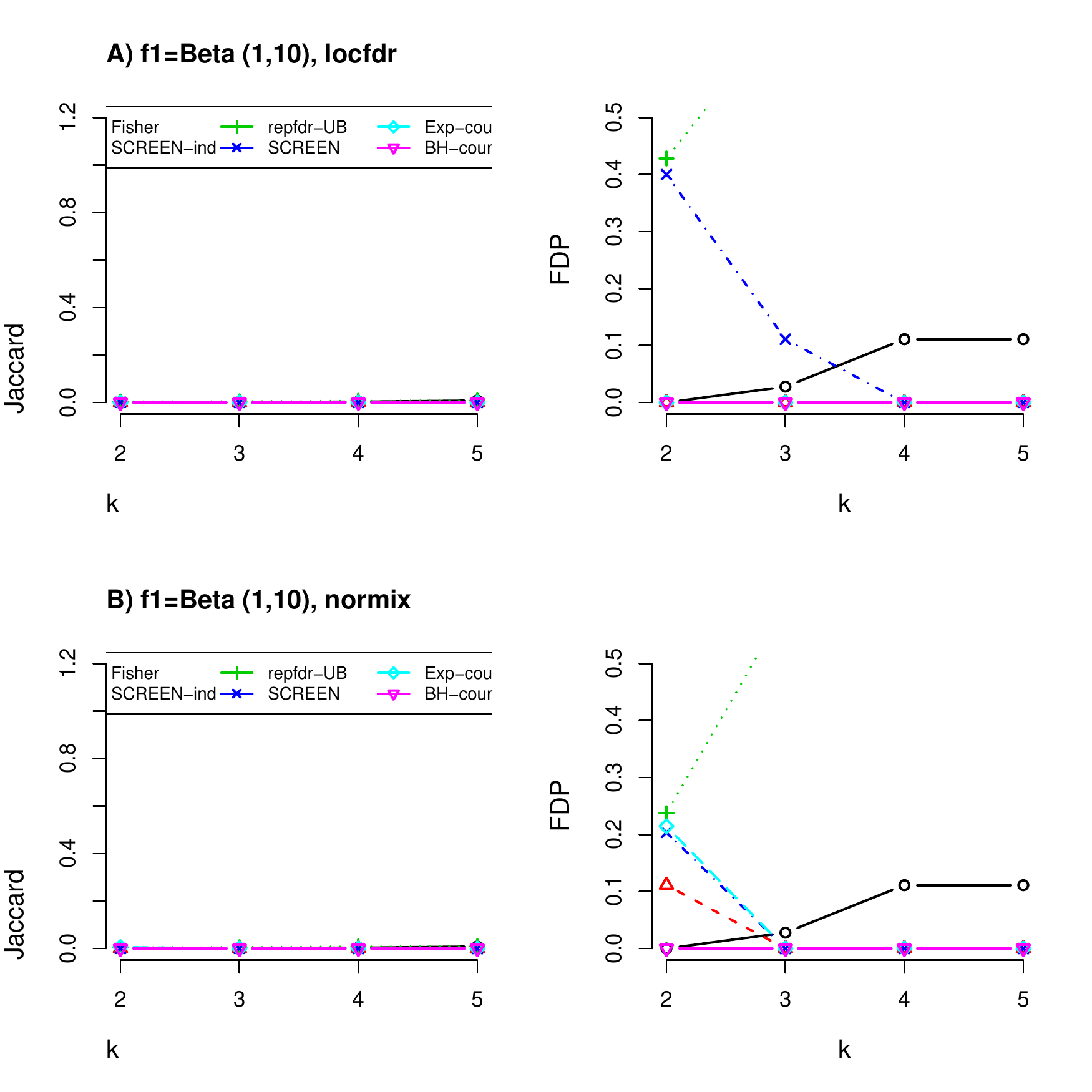}}
\textbf{Supplementary Figure 1} Simulation results: 20 independent studies with $\beta(1,10)$ non-null p-values. The left column shows the Jaccard scores and the right column shows the FDP scores. These scores are calculated by comparing the output gene set of each method for each $k$ to the set of genes for which the real number of non-nulls was at least $k$. The Jaccard scores here are always very low, and the FDP scores of SCREEN and repfdr-UB might be high. However, when the FDP scores are high, very few genes are reported by SCREEN ($\le 4$), whereas EM-UB might report $\ge 10$ genes.
\end{minipage}\\ \\

\pagebreak
\begin{minipage}{\linewidth}
\makebox[\linewidth]{
 \includegraphics[width=150mm,height=150mm]{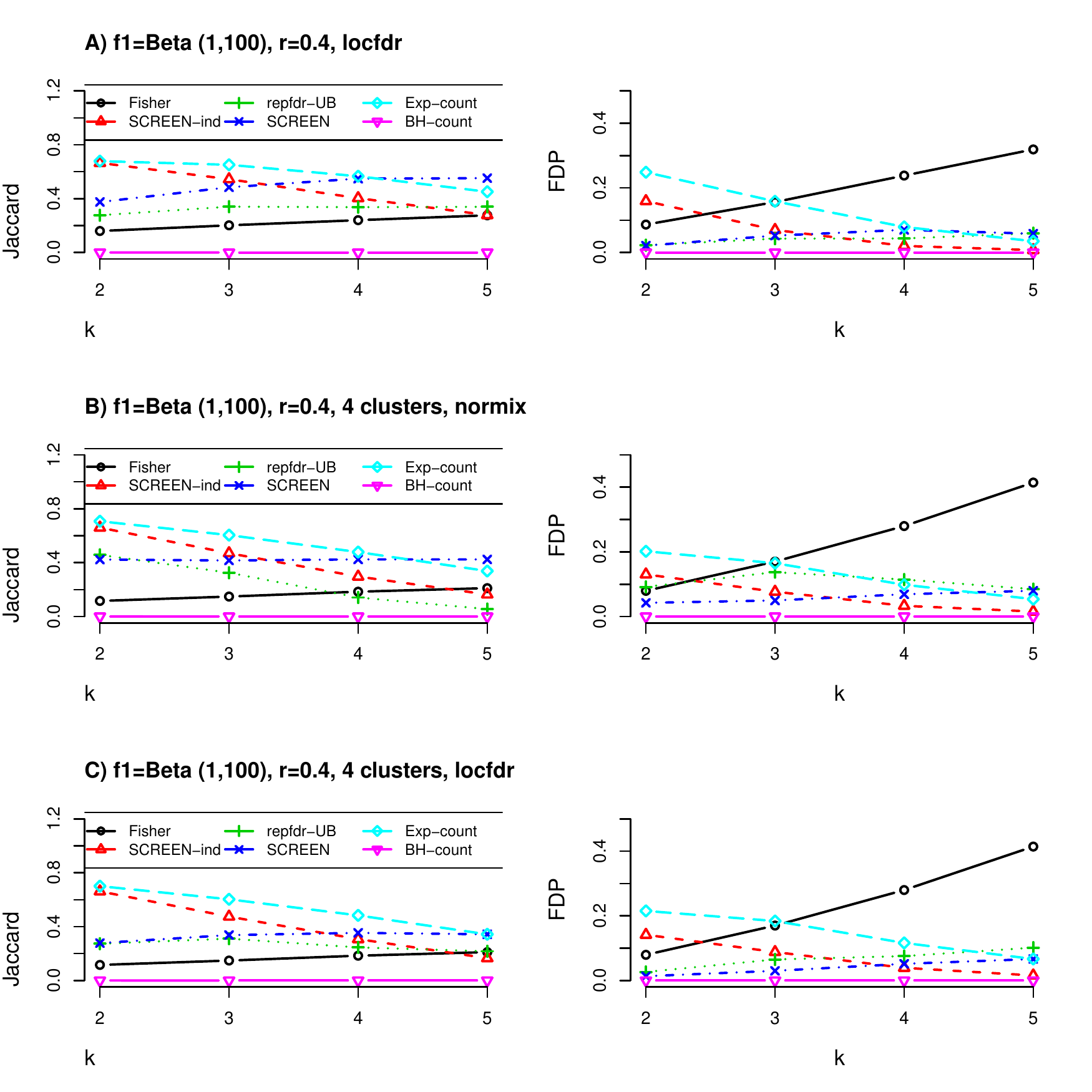}}
\textbf{Supplementary Figure 2} Simulation results: independent study clusters with mild dependence within clusters (Case 2 with $r=0.4$). Each row represents a different test. Each test has either a different non-null distribution in each study, a different method used to learn the two-groups model in each study, or a different number of clusters (A:1, B,C:4). The left column shows the Jaccard scores and the right column shows the FDP scores. These scores are calculated by comparing the output gene set of each method for each $k$ to the set of genes for which the real number of non-nulls was at least $k$.
\end{minipage}\\ \\

\pagebreak
\begin{minipage}{\linewidth}
\makebox[\linewidth]{
 \includegraphics[width=150mm,height=150mm]{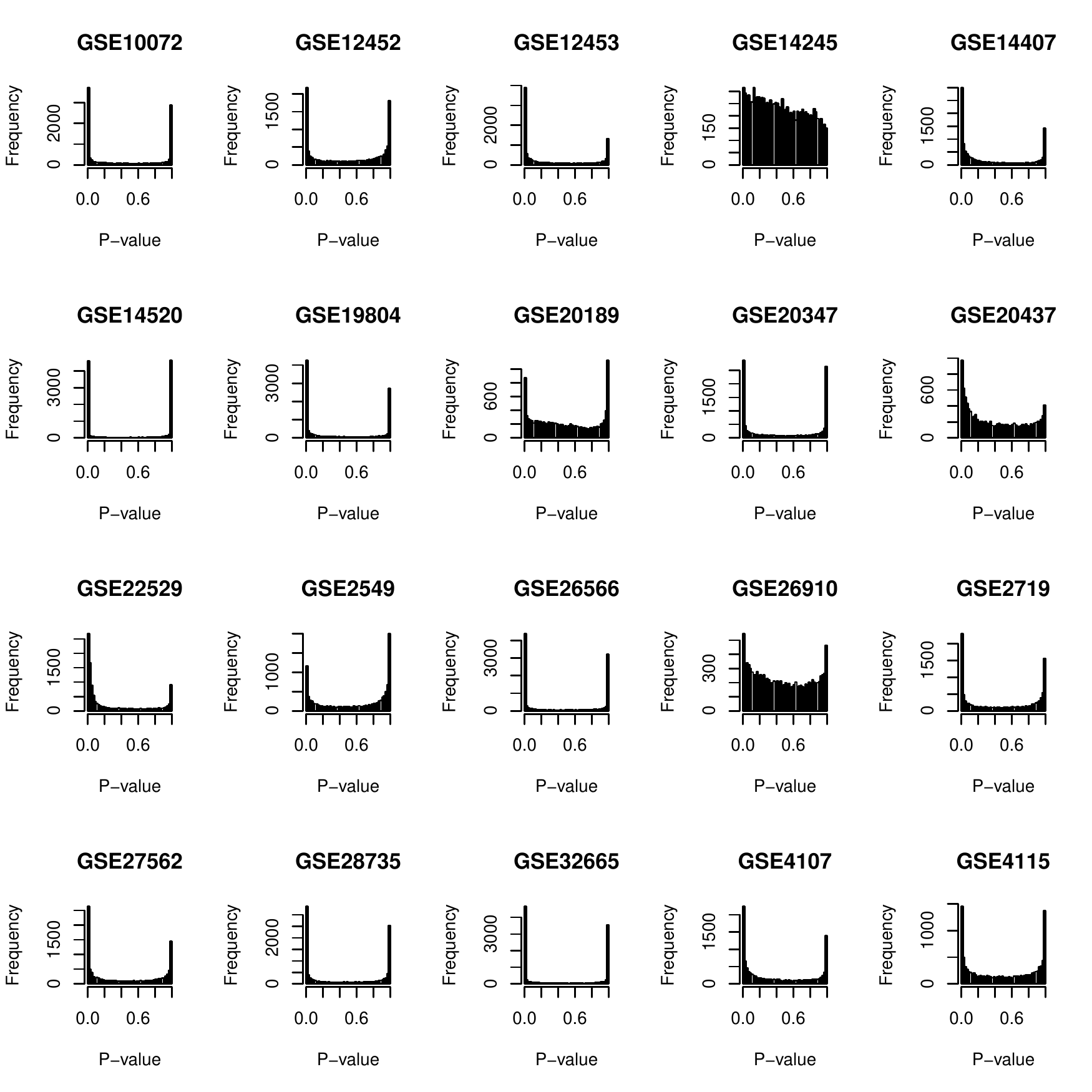}}
\textbf{Supplementary Figure 3} P-value histograms of 20 studies from the DEG dataset.
\end{minipage}\\ \\

\pagebreak
\begin{minipage}{\linewidth}
\makebox[\linewidth]{
 \includegraphics[width=150mm,height=150mm]{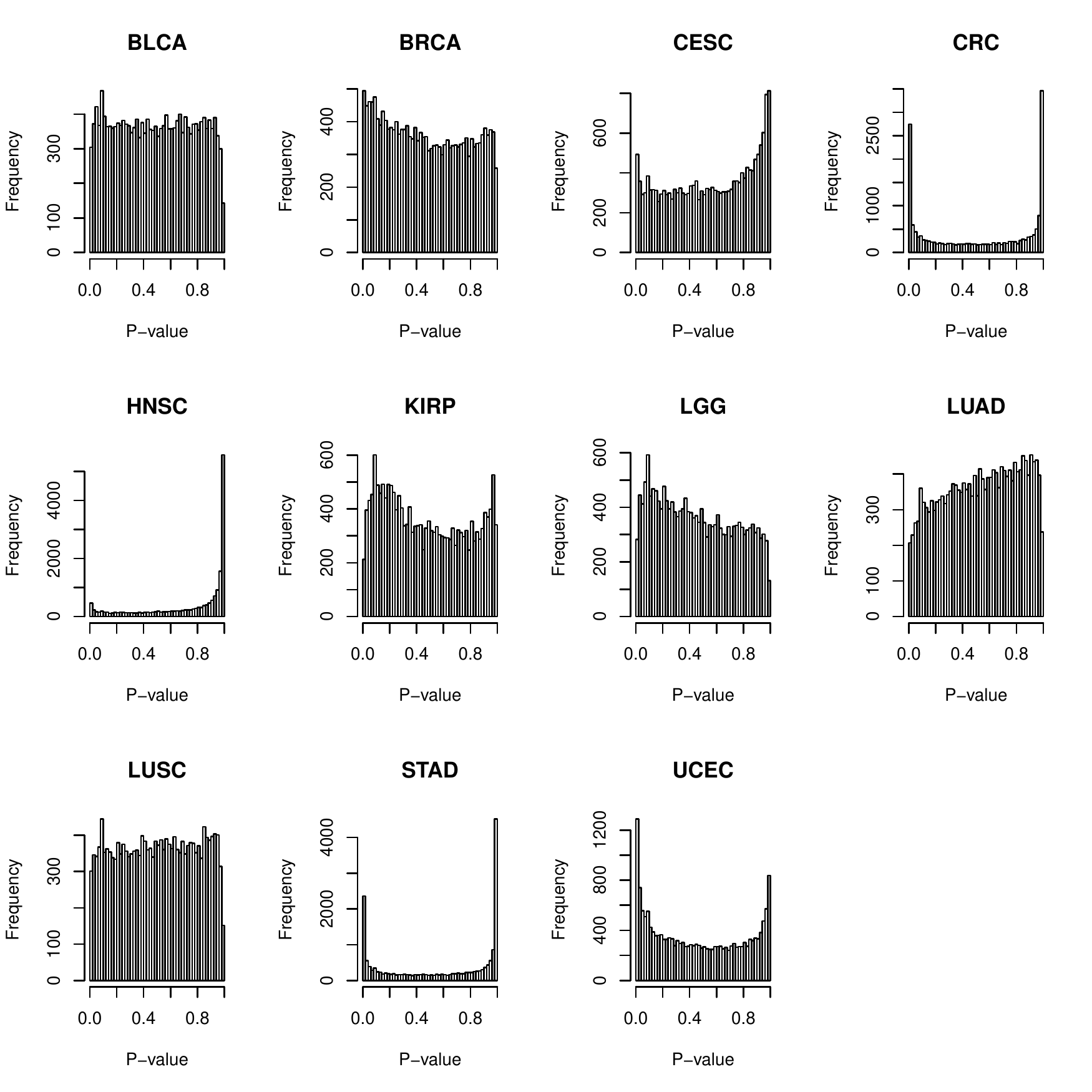}}
\textbf{Supplementary Figure 4} P-value histograms of all studies in the HLA dataset.
\end{minipage}\\ \\

\end{document}